
\typeout{IJCAI--22 Instructions for Authors}


\documentclass[backref=page]{article}
\pdfpagewidth=8.5in
\pdfpageheight=11in
\usepackage{ijcai22}

\usepackage{times}
\usepackage{soul}
\usepackage{url}
\usepackage[hidelinks]{hyperref}
\usepackage[utf8]{inputenc}
\usepackage[small]{caption}
\usepackage{graphicx}
\usepackage{amsmath}
\usepackage{amsthm}
\usepackage{booktabs}
\urlstyle{same}






\pdfinfo{
/Title (Fourier Analysis-based Iterative Combinatorial Auctions)
/Author (Jakob Weissteiner, Chris Wendler, Sven Seuken, Ben Lubin,Markus P{\"u}schel)
/TemplateVersion (IJCAI.2022.0)
}

\title{Fourier Analysis-based Iterative Combinatorial Auctions\footnotemark[1]}

\author{
Jakob Weissteiner$^{1,4}$\footnotemark[2]
\and
Chris Wendler$^{2,4}$\footnotemark[2]\and
Sven Seuken$^{1,4}$\and
Ben Lubin$^3$\And
Markus P{\"u}schel$^{2,4}$
\affiliations
$^1$University of Zurich\\
$^2$ETH Zurich\\
$^3$Boston University\\
$^4$ ETH AI Center\\
\emails
weissteiner@ifi.uzh.ch,
chris.wendler@inf.ethz.ch,
seuken@ifi.uzh.ch,
blubin@bu.edu,
pueschel@inf.ethz.ch
}

\usepackage{our_commands_arXiv} 

\begin{document}

\maketitle

\begin{abstract}
    Recent advances in Fourier analysis have brought new tools to efficiently represent and learn set functions. In this paper, we bring the power of Fourier analysis to the design of combinatorial auctions (CAs). The key idea is to approximate bidders' value functions using Fourier-sparse set functions, which can be computed using a relatively small number of queries. Since this number is still too large for practical CAs, we propose a new hybrid design: we first use neural networks (NNs) to learn bidders’ values and then apply Fourier analysis to the learned representations. On a technical level, we formulate a Fourier transform-based winner determination problem and derive its mixed integer program formulation. Based on this, we devise an iterative CA that asks Fourier-based queries. We experimentally show that our hybrid ICA achieves higher efficiency than prior auction designs, leads to a fairer distribution of social welfare, and significantly reduces runtime. With this paper, we are the first to leverage Fourier analysis in CA design and lay the foundation for future work in this area. Our code is available on GitHub: \url{https://github.com/marketdesignresearch/FA-based-ICAs}.
\end{abstract}

\renewcommand{\thefootnote}{\fnsymbol{footnote}}
\footnotetext[1]{This paper is the slightly updated version of \cited{weissteiner2022fourier} published at IJCAI'22 including the appendix.}
\footnotetext[2]{These authors contributed equally to this paper.}
\renewcommand{\thefootnote}{\arabic{footnote}}

\section{Introduction}\label{Introduction}

Combinatorial auctions (CAs) are used to allocate multiple heterogeneous items to bidders. CAs are particularly useful  in domains where bidders' preferences exhibit \textit{complementarities}  and \textit{substitutabilities} as they allow bidders to submit bids on \textit{bundles} of items rather than on individual items.

Since the bundle space grows exponentially in the number of items, it is impossible for bidders to report values for all bundles in settings with more than a modest number of items. Thus, parsimonious preference elicitation is key for the practical design of CAs.
For general value functions, \cited{nisan2006communication} have shown that to guarantee full efficiency, exponential communication in the number of items is needed. Thus, practical CAs cannot provide efficiency guarantees in large domains. Instead, recent proposals have focused on \textit{iterative combinatorial auctions (ICAs)}, where the auctioneer interacts with bidders over rounds, eliciting a \textit{limited} amount of information, aiming to find a highly efficient allocation.

ICAs have found widespread application; most recently, for the sale of licenses to build offshore wind farms \cite{ausubel2011auction}. For the sale of spectrum licenses, the combinatorial clock auction (CCA) \cite{ausubel2006clock} has generated more than \$20 billion in total revenue \cite{ausubel2017practical}. Thus, increasing the efficiency by only 1--2\% points translates into monetary gains of millions of dollars.

\subsection{Machine Learning-based Auction Design}\label{Machine Learning-based CAs}
Recently, researchers have used machine learning (ML) to improve the performance of CAs. Early work by \cited{blum2004preference} and \cited{lahaie2004applying} first studied the relationship between learning theory and preference elicitation in CAs. \cited{dutting2019optimal}, \cited{shen2019automated} and \cited{rahme2020permutation} used neural networks (NNs) to learn whole auction mechanisms from data.
\cited{brero2019fast} introduced a Bayesian ICA using probabilistic price updates to achieve faster convergence. 
\cited{shen2020reinforcement} use reinforcement learning for dynamic pricing in sponsored search auctions. Most related to the present paper is the work by \citeauthor{brero2018combinatorial} \shortcite{brero2018combinatorial,brero2021workingpaper}, who developed a value-query-based ML-powered ICA using support vector regressions (SVRs) that achieves even higher efficiency than the CCA. In follow-up work, \cited{weissteiner2020deep} extended their ICA to NNs, further increasing the efficiency. In work subsequent to the first version of this paper, \cited{weissteiner2022monotone} proposed Monotone-Value Neural Networks (MVNNs), which are particularly well suited to learning value functions in combinatorial assignment domains.\footnote{After the publication of the present paper, \cited{weissteiner2023bayesian} integrated a notion of uncertainty \cite{heiss2022nomu} into an ML-powered ICA to balance the explore-exploit dilemma.}

However, especially in large domains, it remains a challenge to find the efficient allocation while keeping the elicitation cost low. Thus, even state-of-the-art approaches suffer from significant efficiency losses and often result in unfair allocations, highlighting the need for better preference elicitation algorithms.

\subsection{Combining Fourier Analysis and CAs}\label{IncorporatingFourierAnalysisInCADesign}
The goal of preference elicitation in CAs is to learn bidders' value functions using a small number of queries. Mathematically, value functions are \emph{set functions}, which are in general exponentially large and notoriously hard to represent or learn. To address this complexity, we leverage Fourier analysis for set functions \cite{bernasconi1996fourier,o2014analysis,puschel2020discrete}. In particular, we consider \textit{Fourier-sparse approximations}, which are represented by few parameters. These parameters are the non-zero Fourier coefficients (FCs) obtained by a base change with the Fourier transform (FT). We use the framework by \cited{puschel2020discrete}, which contains new FTs beyond the classical Walsh-Hadamard transform (WHT) \cite{bernasconi1996fourier}, providing more flexibility. Until recently, methods for learning Fourier-sparse set functions focused on the WHT, and they placed assumptions on bidders' value functions that are too restrictive for CAs~\cite{stobbe2012learning}. However, recently, \cited{amrollahi2019efficiently} proposed a new algorithm that can approximate general set functions by WHT-sparse ones, which is suitable for large CAs and we use it in this work.

\subsection{Our Contribution} 
Our main contribution in this paper is to bring the power of Fourier analysis to CA design (Section~\ref{FourierAnalysisOfValueFunctions}). In particular, we formulate \emph{FT-based winner determination problems (WDPs)} and derive corresponding mixed integer programs (MIPs) for several FTs (Section~\ref{FourierTransformBasedWDPsChapter}). Our MIPs allow for the efficient solution of the FT-based WDP and provide the foundation for using Fourier-sparse approximations in auction design. 

We first experimentally show that the WHT performs best among the FTs in terms of induced level of sparsity (Section~\ref{Fourier Sparsity of LSVM}) and reconstruction error (Section~\ref{LearningPerformance}). As an initial approach, we develop a WHT-based allocation rule (Section~\ref{AFourierTransformbasedAllocationRule}). However, this requires too many queries for direct use in CAs.
To overcome this, we propose a practical hybrid ICA based on NNs \textit{and} Fourier analysis (Section~\ref{TheHybridMechanism}). The key idea is to compute Fourier-sparse approximations of NN-based bidder representations, enabling us to keep the number of queries small. The advantage of the NN-based representations is that they capture key aspects of the bidders' value functions and can be queried arbitrarily often (Section~\ref{NNs Support Discovery Experiments}). 

Our efficiency experiments show that our hybrid ICA achieves higher efficiency than state-of-the-art mechanisms, leads to a significant computational speedup, and yields fairer
allocations (Section~\ref{Experiments}). This shows that leveraging Fourier analysis in CA design is a promising new research direction.

\section{Preliminaries}\label{Preliminaries}

In this section, we present our formal model and review the MLCA mechanism, which our hybrid ICA builds upon.

\subsection{Formal Model for ICAs}
We consider a CA with $n$ bidders and $m$ indivisible items. Let $N=\{1,\ldots,n\}$ and $M=\{1,\ldots,m\}$ denote the set of bidders and items, respectively. We denote with $x\in \X=\{0,1\}^m$ a bundle of items represented as an indicator vector, where $x_{j}=1$ iff item $j \in M$ is contained in $x$. Bidders' true preferences over bundles are represented by their (private) value functions $v_i: \X\to \R_+,\,\, i \in N$, i.e., $v_i(x)$ represents bidder $i$'s true value for bundle $x$. By $a=(a_1,\ldots,a_n) \in \X^n$ we denote an allocation of bundles to bidders, where $a_i$ is the bundle bidder $i$ obtains. We denote the set of \textit{feasible} allocations by $\F=\left\{a \in \X^n:\sum_{i \in N}a_{ij} \le 1, \,\,\forall j \in M\right\}$. The (true) \textit{social welfare} of an allocation $a$ is defined as $V(a)=\sum_{i \in N} v_i(a_i).$ We let $a^* \in \argmax_{a \in {\F}}V(a)$ be a social-welfare maximizing, i.e., \textit{efficient}, allocation. The efficiency of any $a \in \F$ is measured by $V(a)/V(a^*)$. We assume that bidders' have quasilinear utilities $u_i$, i.e, for a payments $p\in \R^n_+$ it holds that $u_i(a,p) = v_i(a_i) - p_i$.

An ICA \textit{mechanism} defines how the bidders interact with the auctioneer and how the final allocation and payments are determined. We denote a bidder's (possibly untruthful) reported value function by $\pvi{}:\X\to\R_+$. In this paper, we consider ICAs that ask bidders iteratively to report their value $\pvi{x}$ for particular bundles $x$ selected by the mechanism (for early work on value queries see \cite{hudson2003using}). A finite set of such reported bundle-value pairs of bidder $i$ is denoted as ${R_i=\left\{\left(x^{(l)},\pvi{x^{(l)}}\right)\right\},\,x^{(l)}\in \X}$. Let $R=(R_1,\ldots,R_n)$ denote the tuple of reported bundle-value pairs obtained from all bidders. We define the \textit{reported social welfare} of an allocation $a$ given $R$ as $\hV{a|R}=\sum_{i \in N:\, \left(a_i,\pvi{a_i}\right)\in R_i}\pvi{a_i},$
where $\left(a_i,\pvi{a_i}\right)\in R_i$ ensures that only values for reported bundles contribute. Finally, the optimal allocation $a^*_{R}\in \F$ given the reports $R$ is defined as 
\begin{align}\label{WDPFiniteReports}
a^*_{R} \in \argmax_{a \in {\F}}\hV{a|R}.
\end{align}

The final allocation $a^*_{R}\in \F$ and payments $p(R)\in \R^n_+$ are computed based on the elicited reports $R$ \emph{only}.

As the auctioneer can only ask each bidder $i$ a limited number of queries $|R_i| \leq \Qmax$, the ICA needs a smart preference elicitation algorithm, with the goal of finding a highly efficient $a^*_{R}$ with a limited number of value queries.

\subsection{A Machine Learning-powered ICA}\label{AnNNPoweredICAMechanism}
We now review the \textit{machine learning-powered combinatorial auction (MLCA)} by \cited{brero2021workingpaper}. Interested readers are referred to \Cref{subsec:appendix_AnNNPoweredICAMechanism}, where we present MLCA in detail.

MLCA starts by asking each bidder value queries for $\Qinit$ randomly sampled initial bundles. Next, MLCA proceeds in rounds until a maximum number of value queries per bidder $\Qmax$ is reached. In each round, for each bidder $i\in N$, it trains an ML algorithm $\mathcal{A}_i$ on the bidder's reports $R_i$. Next, MLCA generates new value queries $\qnew=(\qnew_i)_{i=1}^{n}$ with $\qnew_i\in \X \setminus R_i$ by solving a ML-based WDP $\qnew \in \argmax\limits_{a \in {\F}}\sum\limits_{i \in N} \mathcal{A}_i(a_i)$.  The idea is the following: if $\mathcal{A}_i$ are good surrogate models of the bidders' true value functions then $\qnew$ should be a good proxy of the efficient allocation $a^*$ and thus provide valuable information.

At the end of each round, MLCA receives reports $\Rnew$ from all bidders for the newly generated $\qnew$ and updates $R$. When $\Qmax$ is reached, MLCA computes an allocation $a^*_R$ maximizing the \emph{reported} social welfare (\cref{WDPFiniteReports}) and determines VCG payments $p(R)$ (see \Cref{subsec:appendix_VCGpayments}).

\subsection{Incentives of MLCA and Hybrid ICA}
A key concern in the design of ICAs are bidders' incentives. However, the seminal result by \cited{nisan2006communication} discussed above implies that practical ICAs cannot simply use VCG to achieve strategyproofness. And in fact, no ICA deployed in practice is \textit{strategyproof} -- including the famous SMRA and CCA auctions used to conduct spectrum auctions.  Instead, auction designers have designed mechanisms that, while being manipulable, have ``good incentives in practice'' (see \cite{cramton2013spectrumauctions,Milgrom2007PackageAuctionsAndExchanges}). 

Naturally, the MLCA mechanism is also not strategyproof, and \cited{brero2021workingpaper} provide a simple example of a possible manipulation. The idea behind the example is straightforward: if the ML algorithm does not learn a bidder's preferences perfectly, a sub-optimal allocation may result. Thus, a bidder may (in theory) benefit from misreporting their preferences with the goal of ``correcting'' the ML algorithm, so that, with the misreported preferences, the mechanism actually finds a preferable allocation. 

However, MLCA has two features that mitigate manipulations. First, MLCA explicitly queries each bidder's marginal economy, which implies that the marginal economy term of the final VCG payment is practically independent of bidder $i$'s bid (for experimental support see \cite{brero2021workingpaper}). Second, MLCA enables bidders to ``push'' information to the auction which they deem useful. This mitigates certain manipulations of the main economy term in the VCG payment rule, as it allows bidders to increase the social welfare directly by pushing (useful) truthful information, rather than attempting to manipulate the ML algorithm. \cited{brero2021workingpaper} argued that with these two design features, MLCA exhibits very good incentives in practice. They performed a computational experiment, testing whether an individual bidder (equipped with more information than he would have in a real auction) can benefit from deviating from truthful bidding, while all other bidders are truthful. In their experiments, they could not identify a beneficial manipulation strategy. While this does not rule out that some (potentially more sophisticated) beneficial manipulations do exist, it provides evidence to support the claim that MLCA has good incentives in practice.

With two additional assumptions, one also obtains a theoretical incentive guarantee for MLCA. Assumption 1 requires that, if all bidders bid truthfully, then MLCA finds an efficient allocation (we show in \Cref{subsec:appendix_efficiency_experiments} that in two of our domains, we indeed find the efficient allocation in the majority of cases). Assumption 2 requires that, for all bidders $i$, if all other bidders report truthfully, then the social welfare of bidder $i$'s marginal economy is independent of his value reports. If both assumptions hold, then bidding truthfully is an ex-post Nash equilibrium in MLCA.

Our hybrid ICA (\Cref{HybridICA} in \Cref{TheHybridMechanism}) is built upon MLCA, leaving the general framework in place, and only changing the algorithm that generates new queries each round. Given this design, the incentive properties of MLCA extend to the hybrid ICA. Specifically, our hybrid ICA is also not strategyproof, but it also has the same design features (including push-bids) to mitigate manipulations.

In future work, it would be interesting to evaluate experimentally whether the improved performance of the hybrid ICA translates into better manipulation mitigation compared to MLCA. However, such an analysis is beyond the scope of the present paper, which focuses on the ML algorithm that is integrated into the auction mechanism.

\section{Fourier Analysis of Value Functions}\label{FourierAnalysisOfValueFunctions}
 We now show how to apply Fourier analysis to value functions providing the theoretical foundation of FT-based WDPs.

Classic Fourier analysis decomposes an audio signal or image into an orthogonal set of sinusoids of different frequencies. Similarly, the classical Fourier analysis for \emph{set functions} (i.e., functions mapping subsets of a discrete set to a scalar) decomposes a set function into an orthogonal set of Walsh functions \cite{bernasconi1996fourier}, which are piecewise constant with values $1$ and $-1$ only. Recent work by \cited{puschel2020discrete} extends the Fourier analysis for set functions with several novel forms of set Fourier transforms (FTs). Importantly, because bidders' value functions are set functions, they are amenable to this type of Fourier analysis, and it is this connection that we will leverage in our auction design.

\mypar{Sparsity} 
The motivation behind our approach is that we expect bidders' value functions to be \emph{sparse}, i.e., they can be described with much less data than is contained in the exponentially-sized full value function. While this sparsity may be difficult to uncover when looking at bidders' value reports directly, it may reveal itself in the Fourier domain (where then most FCs are zero). As all FTs are changes of basis, each FT provides us with a new \emph{lens on the bidder's value function}, revealing structure and thus potentially reducing dimensionality.

\mypar{Set function Fourier transform} 
We now provide a formal description of FTs for reported value functions $\pvi{}$. To do so, we represent $\pvi{}$ as a vector $(\pvi{x})_{x \in \X}$. Each known FT is a change of basis and thus can be represented by a certain matrix $F\in \{-1,0,1\}^{2^m\times 2^m}$ with the form:
\begin{equation}\label{FTformula}
\hvic(y) = (F \pvi{})({y}) = \sum_{x \in \X} F_{y, x} \pvi{x}.
\end{equation}
There is exactly one Fourier coefficient per bundle, this follows from the theory presented by \cited{puschel2020discrete}.
The corresponding inverse transform $F^{-1}$ is thus:
\begin{equation}\label{eq:FTINVformula}
\pvi{x} = (F^{-1} \hvic)(x) = \sum_{y \in \X} F^{-1}_{x, y} \hvic(y).
\end{equation}
$\hvic$ is again a set function and we call $\hvic(y)$ the \emph{Fourier coefficient} at frequency $y$. A value function is \emph{Fourier-sparse} if $|\supp(\hvic)| = |\set{y: \hvic(y) \neq 0}| \ll 2^m$. We call $\supp(\hvic)$ the \emph{Fourier support} of $\pvi{}$.

Classically, the WHT is used as $F$ \cite{bernasconi1996fourier,o2014analysis}, but we also consider two recently introduced FTs (FT3, FT4) due to their information-theoretic interpretation given in \cite{puschel2020discrete}:
\begin{align}
\textrm{\textbf{FT3:}}\hspace{0.5cm}&F_{y,x} = (-1)^{|y| - |x|} \mathbb{I}_{\min(x,y) = x},\label{FT3formula}&\\
\textrm{\textbf{FT4:}}\hspace{0.5cm}&F_{y,x} = (-1)^{|\min(x,y)|} \mathbb{I}_{\max(x,y) = 1_m},\label{FT4formula}&\\
\textrm{\textbf{WHT:}}\hspace{0.5cm}&F_{y,x} = \frac{1}{2^m} (-1)^{|\min(x, y)|}.\label{WHTformula}
\end{align}
Here, $\min$ is the elementwise minimum (intersection of sets), $\max$ analogously, $|\cdot|$ is the set size, $1_m$ denotes the $m$-dimensional vector of 1s, and the indicator function $\mathbb{I}_{P}$ is equal to $1$ if the predicate $P$ is true and $0$ otherwise.

\mypar{Notions of Fourier-sparsity} 
In recent years, the notion of Fourier-sparsity has gained considerable attention, leading to highly efficient algorithms to compute FTs \cite{stobbe2012learning,amrollahi2019efficiently,wendler2020learning}. Many classes of set functions are Fourier-sparse (e.g., graph cuts, hypergraph valuations and decision trees \cite{abraham2012combinatorial}) and can thus be learned efficiently. The benefit of considering multiple FTs is that they offer different, non-equivalent notions of sparsity as illustrated by the following example.
\begin{example}
Consider the set of items $M = \set{1, 2, 3}$ and the associated reported value function $\pvi{}$ shown in \Cref{tab:example} (where we use $001$ as a shorthand notation for $(0,0,1)$), together with the corresponding FCs $\hvic$:
\begin{table}[h!]
\centering
\begin{sc}
\scriptsize
	\begin{tabular}{
			c
			S[table-format=1.0]
			S[table-format=1.0]
			S[table-format=1.0]
			S[table-format=1.0]
			S[table-format=1.0]
			S[table-format=1.0]
			S[table-format=1.0]
			S[table-format=1.0]
			}
		\toprule
		& {000} & {100} & {010} & {001}& {110} & {101} & {011} & {111}\\
		\cmidrule(lr){2-2}
		\cmidrule(lr){3-5}
        \cmidrule(lr){6-8}
        \cmidrule(lr){9-9}
		$\pvi{}$   & {0}& {\vphantom{-}1} & {\vphantom{-}1}  & {\vphantom{-}1}  & {3} & {3} & {3} & {\vphantom{-}5}\\
		{FT3}      & {0}& {\vphantom{-}1}& {\vphantom{-}1} & {\vphantom{-}1} & {1} & {1} & {1} & {-1} \\
		{FT4}      & {5}& {-2}& {-2} & {-2} & {0} & {0} & {0} & {\vphantom{-}1} \\
		{WHT}      & {$\nicefrac{17}{8}$}& {-$\nicefrac{7}{8}$} & {-$\nicefrac{7}{8}$} & {-$\nicefrac{7}{8}$} & {$\nicefrac{1}{8}$} & {$\nicefrac{1}{8}$} & {$\nicefrac{1}{8}$} & {$\nicefrac{1}{8}$}\\
		\bottomrule
	\end{tabular}
\vskip -0.2cm
\caption{Example with different induced notions of sparsity of all considered FTs.}
\label{tab:example}
\end{sc}
\end{table}
This bidder exhibits complementary effects for each bundle containing more than one item, as can be seen, e.g., from $3 = \pvi{110} > \pvi{100} + \pvi{010} = 2$ and $5=\pvi{111} > \pvi{100} + \pvi{010} +  \pvi{001}=3$. Observe that while this value function is sparse in FT4, i.e., $\hvic(110) = \hvic(101) = \hvic(011) = 0$, it is neither sparse in FT3 nor WHT. Note that the coefficients $\hvic(100)$, $\hvic(010)$, and $\hvic(001)$ capture the value of single items and thus cannot be zero.

The induced \emph{spectral energy} distributions for each FT, i.e., for each cardinality (i.e., number of items) $d$ from $0$ to $m=3$, we compute $\sum_{y\in \X: |y| = d} \hvic(y)^2/\sum_{y\in \X} \hvic(y)^2$, are shown in \Cref{tab:example_spectralenergy}.

\begin{table}[h!]
\centering
\begin{sc}
\scriptsize
	\begin{tabular}{
		    c
			S[table-format=2.2]
			S[table-format=2.2]
			S[table-format=2.2]
			S[table-format=2.2]
			}
			\toprule
		     & {$d=0$} & {$d=1$} & {$d=2$} & {$d=3$}\\
		     \midrule
		{FT3}& 00.00& 42.86& 42.86 & 14.28 \\
		{FT4}& 65.79& 31.58& 00.00 & 02.63 \\
		{WHT}& 65.69& 33.41& 00.68& 00.22\\
		\bottomrule
	\end{tabular}
\vskip -0.2cm
\caption{Spectral energy in $\%$ for each cardinality (i.e., number of items) $d$ from $0$ to $m=3$ of all considered FTs.}
\label{tab:example_spectralenergy}
\end{sc}
\vskip -0.1cm
\end{table}
\end{example}
\mypar{Fourier-sparse approximations} 
In practice, $\pvi{}$ may only be approximately sparse. Meaning that while not being sparse, it can be approximated well by a Fourier-sparse function $\tvi{}$. Formally, let  $\Si{1} = \supp(\htvi)$ with $|\Si{1}| = k$, we call 
\begin{equation}
\tvi{x} = \sum_{y \in \Si{1}} F^{-1}_{x,y} \htvi(y) \text{ for all } x \in \X
\end{equation}
such that $\|\tvi{} - \pvi{}\|_2$ is small a \textit{$k$-Fourier-sparse approximation} of $\pvi{}$.
We denote the vector of FCs by ${\htvSic = (\htvi(y))_{y \in \Si{1}}}$. 

\section{Fourier Transform-based WDPs}\label{FourierTransformBasedWDPsChapter}

To leverage Fourier analysis for CA design, we represent bidders' value functions using Fourier-sparse approximations.  A key step in most auction designs is to find the social welfare-maximizing allocation given bidder's reports, which is known as the \emph{Winner Determination Problem} (WDP). To apply FTs, we need to be able to solve the WDP efficiently. Accordingly, we next derive MIPs for each of the FTs. 
 
For each bidder $i \in N$, let $\tvi{}: \X \to \R_+$ be a Fourier-sparse approximation of the bidders' reported value function $\pvi{}$. Next, we define the \textit{Fourier transform-based WDP}.
\begin{definition}{(\textsc{Fourier transform-based WDP})}
\begin{align}\label{FourierTransformBasedWDP}
&\argmax\limits_{a \in \F}\sum_{i \in N}\tvi{a_i}\tag{FT-WDP}.
\end{align}
\end{definition}
For $x,y \in \R^d,$ let $x\leq y,\, \max(x,y)$ and $\left(-1\right)^x$ be defined component-wise, and let $\langle \cdot,\cdot \rangle$ denote the Euclidean scalar product. First, we formulate succinct representations of $\tvi{}$.
\begin{lemma}\label{SuccinctRepresentations}
For $i\in N$ let $\Si{1} = \{y^{(1)}, \dots, y^{(k)}\}$ be the support of a $k$-Fourier-sparse approximation $\tvi{}$ and $W_i \in \{0, 1\}^{k \times m}$ be defined as $\left(W_i\right)_{l,j} = \mathbb{I}_{y^{(l)}_j=1}$. Then it holds that
\begin{align}
\textrm{\textbf{FT3:}}\hspace{0.2cm}&\tvi{x} = \left\langle \htvSic, \max\left(0_k, 1_k\hspace{-0.07cm} -\hspace{-0.06cm} W_i(1_m\hspace{-0.06cm}-\hspace{-0.06cm}x)\right) \right\rangle\label{FT3}\\
\textrm{\textbf{FT4:}}\hspace{0.2cm}&\tvi{x} = \left\langle \htvSic, \max\left(0_k, 1_k - W_i x\right) \right\rangle\label{FT4}\\
\textrm{\textbf{WHT:}}\hspace{0.2cm}&\tvi{x} = \left\langle \htvSic, (-1)^{W_i x}\right\rangle \label{WHT}.
\end{align}
\end{lemma}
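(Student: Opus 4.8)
The plan is to reduce each of the three identities to two ingredients: (i) an explicit description of the inverse transform matrix $F^{-1}$ on the columns indexed by $\Si{1}$, and (ii) a routine rewriting of the scalar $F^{-1}_{x,y^{(l)}}$ in terms of the $l$-th row of $W_i$. Throughout I would use that, by definition of $W_i$, $(W_i)_{l,j} = y^{(l)}_j$, so that $(W_i x)_l = \sum_j y^{(l)}_j x_j = |\min(x,y^{(l)})|$ and $(W_i(1_m-x))_l = \sum_j y^{(l)}_j(1-x_j) = |y^{(l)}\setminus x|$ are nonnegative integers.

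\emph{Step 1 (inverse transforms).} Starting from $\tvi{x} = \sum_{y\in\Si{1}}F^{-1}_{x,y}\htvi(y)$, i.e.\ \eqref{eq:FTINVformula} applied to the $k$-sparse approximation, I would determine $F^{-1}$ for each matrix in \eqref{FT3formula}--\eqref{WHTformula}. For the WHT: writing $|\min(x,y)| = \sum_j x_jy_j$ and $M_{x,y} := (-1)^{|\min(x,y)|}$, one gets $(M^2)_{x,z} = \sum_{y\in\X}(-1)^{\sum_j y_j(x_j+z_j)} = \prod_j\big(1+(-1)^{x_j+z_j}\big) = 2^m\,\mathbb{I}_{x=z}$, and since $F = 2^{-m}M$ by \eqref{WHTformula} this gives $F^{-1} = M$, i.e.\ $F^{-1}_{x,y} = (-1)^{|\min(x,y)|}$. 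For FT3, \eqref{FT3formula} is exactly the Möbius (inclusion--exclusion) operator $\hvic(y) = \sum_{x\subseteq y}(-1)^{|y|-|x|}\pvi{x}$ on the Boolean lattice, whose inverse is the subset-sum operator, so $F^{-1}_{x,y} = \mathbb{I}_{\min(x,y)=y}$, i.e.\ $\mathbb{I}_{y\subseteq x}$ (the only identity needed is $\sum_t\binom{r}{t}(-1)^t = \mathbb{I}_{r=0}$). For FT4, substituting the complement $u = 1_m-y$ shows that $u\mapsto\hvic(1_m-u)$ is the corresponding superset-Möbius transform of $\pvi{}$; inverting and substituting back yields $F^{-1}_{x,y} = \mathbb{I}_{\min(x,y)=0_m}$, the indicator that $x$ and $y$ are disjoint. (Both inverses can also be quoted from \cite{puschel2020discrete}.)

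\emph{Step 2 (rewriting via $W_i$ and conclusion).} Plugging $y = y^{(l)}$ into Step 1: for the WHT, $F^{-1}_{x,y^{(l)}} = (-1)^{(W_ix)_l}$, and stacking over $l = 1,\dots,k$ gives \eqref{WHT}. For FT3, $y^{(l)}\subseteq x$ iff $|y^{(l)}\setminus x| = 0$, and for a nonnegative integer $c$ we have $\mathbb{I}_{c=0} = \max(0,1-c)$; hence $F^{-1}_{x,y^{(l)}} = \max\big(0,\,1-(W_i(1_m-x))_l\big)$, which stacks to \eqref{FT3}. For FT4, $x$ and $y^{(l)}$ are disjoint iff $|\min(x,y^{(l)})| = 0$, so the same observation gives $F^{-1}_{x,y^{(l)}} = \max\big(0,\,1-(W_ix)_l\big)$ and therefore \eqref{FT4}. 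In every case, collecting the $k$ summands into a Euclidean scalar product with $\htvSic = (\htvi(y^{(l)}))_l$ finishes the argument.

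The only genuine obstacle is Step 1 for FT3 and FT4, since the excerpt states the forward matrices but not their inverses: one must either carry out the Boolean-lattice Möbius inversion (elementary, but requiring some care with the complementation that appears in FT4) or cite it. Everything afterwards is bookkeeping---in particular the $\max(0,1-c)$ encoding of the indicators, which is what later makes the MIP formulation possible. An equivalent, inverse-free route would be to apply the forward transform $F$ directly to each claimed right-hand side (viewed as a function of $x$) and verify that the result is supported on $\Si{1}$ with coefficients $\htvi(y^{(l)})$, but this is the same Möbius/Hadamard computation read backwards.
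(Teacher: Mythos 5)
Your proposal is correct and follows essentially the same route as the paper: start from $\tvi{x}=\sum_{y\in\Si{1}}F^{-1}_{x,y}\htvi(y)$, use the explicit inverses $F^{-1}_{x,y}=\mathbb{I}_{\min(x,y)=y}$ (FT3), $\mathbb{I}_{\min(x,y)=0_m}$ (FT4), $(-1)^{|\min(x,y)|}$ (WHT), and encode the indicators via $\max(0,1-c)$ with $c=(W_i(1_m-x))_l$ resp.\ $(W_ix)_l$. The only difference is that you derive the inverse matrices (Möbius inversion and $M^2=2^mI$) whereas the paper simply quotes them from \cite{puschel2020discrete}; both are fine.
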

See \Cref{subsec:appendix_Proof_Lemma_1} for the proof. With Lemma \ref{SuccinctRepresentations} and rewriting $\max(\cdot,\cdot)$ and $(-1)^{\cdot}$ as linear constraints, we next encode \eqref{FourierTransformBasedWDP} as a MIP (see \Cref{subsec:appendix_Proof_Thm1} for the proof).
\begin{theorem}{\textsc{(FT-based MIPs)}}\label{FourierTransformBasedMIPs}
Let $\tvi{}:\X\to \R$ be a k-Fourier-sparse approximation from \eqref{FT3}, \eqref{FT4}, or \eqref{WHT}. Then there exists a $C>0$ s.t. the MIP defined by the objective
\begin{align}\label{appendix_GenericMIP}
&\argmax\limits_{a \in \F, \beta_i\in \{0,1\}^{k}}\sum_{i \in N}\langle\htvSic,\alpha_i\rangle,
\end{align}
and for $i\in N$ one set of transform specific constraints \eqref{FT3Contraints1}--\eqref{FT3Constraints3}, or \eqref{FT4Contraints1}--\eqref{FT4Constraints3}, or \eqref{WHTConstraints1}--\eqref{WHTConstraints3}, is equivalent to \eqref{FourierTransformBasedWDP}.
\begin{align}
\textrm{\textbf{FT3:}}\hspace{0.5cm}\textrm{s.t.}\hspace{0.15cm}&\alpha_i\ge 1_k - W_i(1_m-a_i)\label{FT3Contraints1}\\
&\alpha_i\le 1_k-W_i(1_m-a_i)+C\beta_i\label{FT3Contraints2}\\
&0_k\le \alpha_i\le C(1_k-\beta_i)\label{FT3Constraints3}\\
\textrm{\textbf{FT4:}}\hspace{0.5cm}\textrm{s.t.}\hspace{0.15cm}&\alpha_i\ge 1_k - W_i a_i\label{FT4Contraints1}\\
&\alpha_i\le 1_k-W_i a_i+C\beta_i\label{FT4Contraints2}\\
&0_k\le \alpha_i\le C(1_k-\beta_i)\label{FT4Constraints3}\\
\textrm{\textbf{WHT:}}\hspace{0.5cm}\textrm{s.t.}\hspace{0.15cm}&\alpha_i=-2\beta_i+1_k\label{WHTConstraints1}\\
& \beta_i=W_i a_i-2\gamma_i\label{WHTConstraints2}\\
&\gamma_i \in \Z^k\label{WHTConstraints3}
\end{align}
\end{theorem}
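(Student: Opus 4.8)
The plan is to verify, transform by transform, that the listed constraints are a correct linearization of the closed forms in Lemma~\ref{SuccinctRepresentations}. Concretely, I will show that for every bidder $i\in N$ and every fixed $a_i\in\X$, the set of auxiliary vectors $\alpha_i$ (together with $\beta_i$, and $\gamma_i$ in the WHT case) admissible under the transform-specific constraints consists exactly of the single vector $\alpha_i$ equal to the argument of the inner product in \eqref{FT3}, \eqref{FT4}, respectively \eqref{WHT}. Once this is established, the projection of the MIP's feasible set onto the allocation variables is precisely $\F$, and along each such fiber the objective $\sum_{i\in N}\langle\htvSic,\alpha_i\rangle$ equals $\sum_{i\in N}\tvi{a_i}$ by Lemma~\ref{SuccinctRepresentations}; hence the MIP \eqref{appendix_GenericMIP} and \eqref{FourierTransformBasedWDP} share the same optimal value and the same set of optimal allocations, which is the asserted equivalence.

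For FT3 and FT4 this reduces to a coordinatewise big-$M$ argument. Write $z_i := 1_k - W_i(1_m - a_i)$ for FT3 and $z_i := 1_k - W_i a_i$ for FT4, so that \eqref{FT3} and \eqref{FT4} both read $\tvi{a_i}=\langle\htvSic,\max(0_k,z_i)\rangle$. Fix a coordinate $l$ and a value of the binary $\beta_{i,l}$. If $\beta_{i,l}=0$, then \eqref{FT3Contraints1} and \eqref{FT3Contraints2} (resp.\ \eqref{FT4Contraints1}, \eqref{FT4Contraints2}) pin $\alpha_{i,l}=z_{i,l}$, and \eqref{FT3Constraints3} (resp.\ \eqref{FT4Constraints3}) additionally demands $0\le z_{i,l}\le C$. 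If $\beta_{i,l}=1$, then \eqref{FT3Constraints3} (resp.\ \eqref{FT4Constraints3}) forces $\alpha_{i,l}=0$, \eqref{FT3Contraints1} (resp.\ \eqref{FT4Contraints1}) demands $z_{i,l}\le 0$, and \eqref{FT3Contraints2} (resp.\ \eqref{FT4Contraints2}) demands $z_{i,l}\ge -C$. Since each entry of $z_i$ lies in $[1-m,\,1]$ as $a_i$ ranges over $\X$, any $C\ge m$ renders the bounds $z_{i,l}\le C$ and $z_{i,l}\ge -C$ vacuous, and such a $C$ works simultaneously for all bidders. Consequently, for each $l$ at least one value of $\beta_{i,l}$ is feasible and every feasible value yields $\alpha_{i,l}=\max(0,z_{i,l})$ (the two branches agree on $\alpha_{i,l}$ precisely when $z_{i,l}=0$, so $\alpha_i$ is determined even where $\beta_i$ is not). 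This is exactly the desired characterization.

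For the WHT no big-$M$ constant is needed---any $C>0$ will do---and the argument is algebraic. Constraints \eqref{WHTConstraints2}--\eqref{WHTConstraints3} assert $\beta_i \equiv W_i a_i \pmod 2$ componentwise; combined with $\beta_i\in\{0,1\}^k$ this forces $\beta_{i,l}$ to be the parity of $(W_i a_i)_l$, and conversely the integer witness $\gamma_i = (W_i a_i - \beta_i)/2$ exists. Substituting into \eqref{WHTConstraints1} gives $\alpha_{i,l} = 1 - 2\beta_{i,l} = (-1)^{(W_i a_i)_l}$, i.e.\ $\alpha_i = (-1)^{W_i a_i}$, matching \eqref{WHT}.

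The main obstacle is the care required in the FT3/FT4 big-$M$ step: one must check both directions---soundness (every feasible $(\alpha_i,\beta_i)$ of the MIP corresponds to an allocation with the correct objective contribution) and completeness (every $a\in\F$ is realized by some feasible $(\alpha_i,\beta_i)$)---and must choose $C$ so that it neither truncates the true value $\max(0_k,z_i)$ through \eqref{FT3Constraints3}/\eqref{FT4Constraints3} nor renders the inactive branch $\beta_{i,l}=1$ infeasible for very negative $z_{i,l}$ through \eqref{FT3Contraints2}/\eqref{FT4Contraints2}; in other words $C$ must dominate all attainable values of both $\max(0,z_{i,l})$ and $-z_{i,l}$ over all bidders and allocations, which is where boundedness of $\X$ (hence of the $z_i$) enters. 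Verifying that the same $C$ serves every bidder and that the objective is well-defined even though $\beta_i$ may fail to be unique is the only delicate point; the remainder is routine linearization.
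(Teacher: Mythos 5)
Your proposal is correct and follows essentially the same route as the paper's proof: the coordinatewise big-$M$ case analysis on $\beta_{i,l}$ is exactly the paper's ``Max Representation'' lemma applied with $\zeta=0_k$ and $\eta=z_i$, and your parity argument for the WHT is its ``Odd-Even Representation'' lemma. The only cosmetic differences are that the paper factors these steps into two standalone lemmas and leaves $C$ abstract (any bound on $|z_{i,l}|$), whereas you inline the case analysis and give the explicit choice $C\ge m$; both are fine.
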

\section{Analyzing the Potential of a FT-based CA}\label{AnalyzingThePotentialOfA FourierTransformBasedCA}
In this section, we experimentally evaluate the FTs and propose an FT-based allocation rule that motivates our practical hybrid ICA mechanism presented later in Section~\ref{APracticalHybridCAMechanism}.

For our experiments, we use the spectrum auction test suite (SATS) \cite{weiss2017sats}.\footnote{We used SATS version 0.6.4 for our experiments. The implementations of GSVM and LSVM have changed slightly in newer SATS versions. This must be considered when comparing the performance of different mechanisms in those domains.} SATS enables us to generate synthetic CA instances in different domains. We have access to each bidder's \textit{true} full value function $v_i$ and the efficient allocation $a^*$. When simulating bidders, we assume truthful bidding (i.e., $\hat{v}_i=v_i$). We consider three domains:

\textbf{The Global Synergy Value Model (GSVM)} \cite{goeree2010hierarchical} has $18$ items, $6$ \textit{regional} and $1$ \textit{national bidder}.

\textbf{The Local Synergy Value Model (LSVM)} \cite{scheffel2012impact} consists of $18$ items, $5$ \textit{regional} and $1$ \textit{national bidder}. Complementarities arise from spatial proximity of items.

\textbf{The Multi-Region Value Model (MRVM)} \cite{weiss2017sats} has $98$ items and $10$ bidders (categorized as  \textit{local}, \textit{regional}, or \textit{national}) and models large US spectrum auctions.

\setlength{\textfloatsep}{\textfloatsepsave}
\subsection{Notions of Fourier Sparsity}\label{Fourier Sparsity of LSVM}

\begin{figure}
        \centering
        \includegraphics[width=1\columnwidth,trim= 15 15 10 10 ,clip]{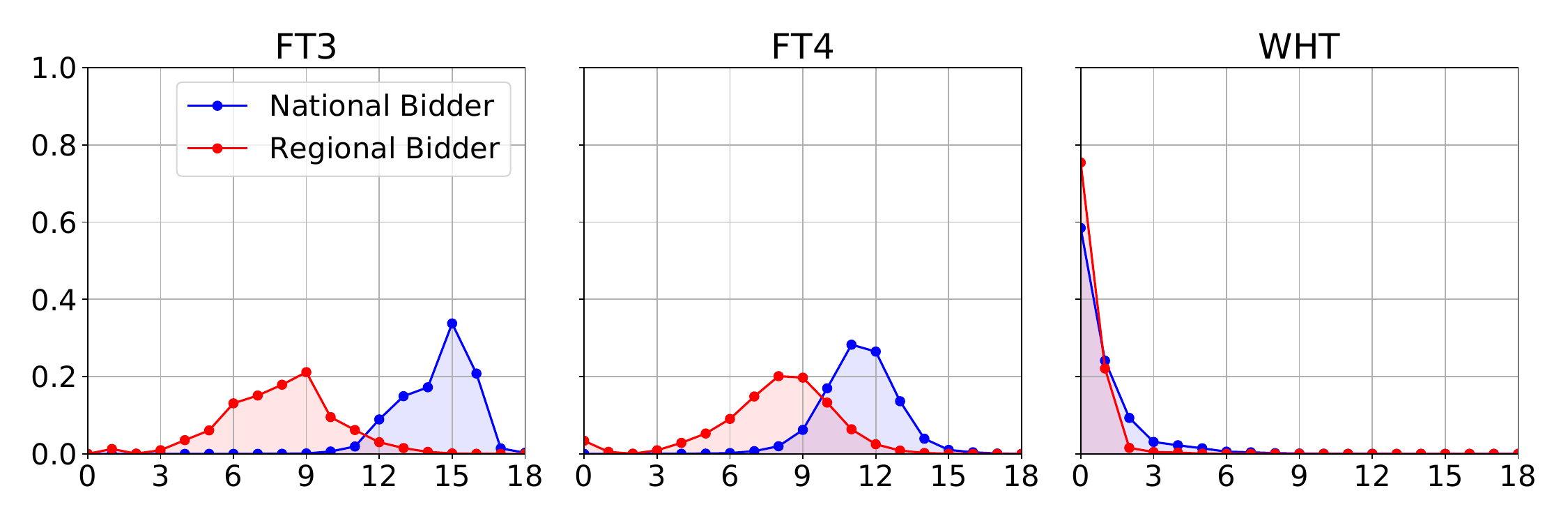}
        \vskip -0.2cm
        \caption{Spectral energy distribution in LSVM for all FTs. For each cardinality (x-axis), we plot the spectral energy of all frequencies of that cardinality normalized by the total spectral energy (y-axis).}
        \label{fig:energy_distribution}
\end{figure}

We first experimentally show that different notions of FT lead to different types of sparsity in LSVM (for other domains see \Cref{subsec:appendix_fourier_sparsity}). For this we first compute the FTs of all bidders and then calculate their \emph{spectral energy} distribution. That is, for each cardinality $d$ (\#items) from $0$ to $m$, we compute $\sum_{y\in \X: |y| = d} \hvic(y)^2/\sum_{y\in \X} \hvic(y)^2$. In Figure~\ref{fig:energy_distribution}, we present the mean over 30 LSVM instances and bidder types.

Figure~\ref{fig:energy_distribution} shows that while the energy is spread among FCs of various degrees in FT3 and FT4, in WHT the low degree ($\leq 2$) FCs contain most of the energy, i.e., the WHT has much fewer dominant FCs that accurately describe each value function. As the WHT is orthogonal, learning low degree WHT-sparse approximations leads to low reconstruction error. Low degree WHT-sparse approximations can be learnt efficiently and accurately from a small number of queries using compressive sensing~\cite{stobbe2012learning}. 

Note that the FT3 is identical to the classical polynomial value function representation~\cite{lahaie2010kernel} defined as
\begin{align}\label{eq:polyV}
\hat{v}_i^{\text{poly}}(x)=\sum_{l=1}^m\sum_{\boldsymbol{j}=\{j_1,\ldots,j_l\}\subseteq M}x_{j_1}\cdot...\cdot x_{j_l}\cdot c^{(i)}_{\boldsymbol{j}}.
\end{align}
where the coefficient $c^{(i)}_{\boldsymbol{j}}$ is equal to the FT3 FC at frequency $y$ with $y_{i} = 1$ for $i \in \{j_1, \dots, j_l\}$ and $y_i = 0$ else.\footnote{This can be seen by calculating the inverse in \eqref{FT3formula}, i.e., $F^{-1}_{y,x}=\mathbb{I}_{\min(x,y) = x}$, and plugin $F^{-1}_{y,x}$ into \eqref{eq:FTINVformula}.}
 E.g. for $M=\{1,2\},$ $\hat{v}_i^{\text{poly}}(x)=x_1c^{(i)}_{\{1\}}+x_2c^{(i)}_{\{2\}}+x_1x_2c^{(i)}_{\{1,2\}}$. Thus, converting $\hat{v}_i^{\text{poly}}$ into another FT basis (here WHT) can indeed be very helpful for the design of ML-based CAs.

\subsection{Reconstruction Error of Fourier Transforms}\label{LearningPerformance}
Next we validate the FT approach by comparing the reconstruction error of the FTs in the medium-sized GSVM and LSVM, where we can still compute the full FT (in contrast to MRVM). For now, we assume that we have access to bidders' full $\pvi{}$. In Procedure~\ref{Best FCs Full Information Setting}, we determine the best $k$-Fourier-sparse approximation $\tvi{}$ (see \Cref{subsec:appendix_reconstruction_error} for details).

\begin{algoprocedure}{\textsc{(Best FCs Given Full Access to $\pvi{}$)}}\label{Best FCs Full Information Setting}\ \\
Compute $\tvi{}$ using the $k$ absolutely largest FCs $\hvSic$ from the full FT for each bidders' reported value function $\hvic=F \pvi{}$.
\end{algoprocedure}
\begin{remark} 
Since the WHT is orthogonal and the simulated auction data is noise-free, its approximation error is exactly equal to the residual of the FCs. Thus, Procedure~\ref{Best FCs Full Information Setting} is optimal for the WHT. This is not the case for FT3 and FT4 because they are not orthogonal.
\end{remark}
\hyphenation{RMSE}
\hyphenation{RMSEs}
\begin{table}[t!]
	\robustify\bfseries
	\centering
	\begin{sc}
	\resizebox{1\columnwidth}{!}{
		\begin{tabular}{
				c
				c
				c
				S[table-format=3.1,table-figures-uncertainty=2, detect-weight]
				S[table-format=3.1,table-figures-uncertainty=2, detect-weight]
				S[table-format=3.1,table-figures-uncertainty=2, detect-weight]
				S[table-format=3.1,table-figures-uncertainty=2, detect-weight]
			}
			\toprule
			{Domain} & {k} & {Bidder} & {FT3} & {FT4} & {WHT} & {NN}\\ 
			\midrule
			\multirow{4}{*}{GSVM}&\multirow{2}{*}{100} & {Nat.} & 11.3(07)  & 14.2(08) &  \ccell 1.8(01) & 9.0(18)\\
			& & {Reg.} & \ccell0.0(00) & 1.4(02) & 0.4(01) & 7.2(09)\\ 
			\cmidrule{2-7}
			&\multirow{2}{*}{200} & {Nat.} & \ccell0.0(00)  & \ccell0.0(00) & \ccell0.0(00) & 5.7 (04)\\ 
			& & {Reg.} &\ccell0.0(00) &\ccell0.0(00) &\ccell0.0(00) & 5.2(08)\\
			\midrule
			\multirow{4}{*}{LSVM}&\multirow{2}{*}{100} & {Nat.} & 78.4(10) & 580.2(79) & \ccell31.2(04) & 48.7(12)\\  
			& & {Reg.} & 28.2(23) & 48.5(27) & \ccell6.8(03) & 17.8(07)\\
			\cmidrule{2-7}
			&\multirow{2}{*}{200} & {Nat.} & 95.8(12) & 639.0(100) & \ccell26.2(03) & 40.6(07)\\ 
			& & {Reg.} & 25.8(20) & 43.1(24) &\ccell5.3(03) & 15.3(09)\\
			\bottomrule
	\end{tabular}}
	\vskip -0.2cm
	\caption{Reconstruction error with a $95\%$-CI of $k$-Fourier-sparse approximations $\protect\tvi{}$ and NNs trained on $k$ randomly selected bundles. Winners are marked in grey.
	}
	\label{tab:LearnPerfOfFTs}
	\end{sc}
\end{table}
We then calculate the RMSE ${\scriptsize(\frac{1}{2^m}\sum_{x\in \X}(\pvi{}(x)\hspace{-0.05cm}-\hspace{-0.05cm}\tvi{x})^2)^{\nicefrac{1}{2}}}$ averaged over 100 instances and bidder types. In Table~\ref{tab:LearnPerfOfFTs}, we present the RMSEs for the three FTs and for NNs, where we used the architectures from \cited{weissteiner2020deep}.

For GSVM, we observe that we can perfectly reconstruct $\pvi{}$ with the $200$ best FCs, which shows that GSVM is 200-sparse. 
In contrast, LSVM is non-sparse, and we do not achieve perfect reconstruction with $200$ FCs. Overall, we observe that the WHT outperforms FT3 and FT4. Moreover, we see that, if we could compute the $k$ best FCs of the WHT from $k$ training points, the WHT would outperform the NNs.

However, in practice, we do not have access to full value functions. Instead, we must use an algorithm that computes the best FCs using a reasonable number of value queries. 
\begin{remark} Thanks to its orthogonality the WHT has strong theoretical guarantees for sparse recovery from few samples using compressive sensing (see \cite{stobbe2012learning}). Thus, we focus on the WHT in the remainder of this paper.
\end{remark}

\subsection{A Fourier Transform-based Allocation Rule}\label{AFourierTransformbasedAllocationRule}
We now present an FT-based allocation rule using the \textit{robust sparse WHT algorithm (RWHT)} by \cited{amrollahi2019efficiently}. RWHT learns a Fourier-sparse approximation $\tvi{}$ of $\pvi{}$ from \textit{value queries}. Procedure~\ref{FT-based Allocation Rule} finds the allocation $\tilde{a}$.
\begin{algoprocedure}\textsc{(WHT-based Allocation Rule)}\,\label{FT-based Allocation Rule}

\hspace{-0.2cm}\textbf{i.} Use RWHT to compute $k$-sparse approximations $\tvi{}\,,i\in N$.

\hspace{-0.2cm}\textbf{ii.} Solve ${\tilde{a}\hspace{-0.03cm}\in\hspace{-0.03cm} \argmax\limits_{a \in \F}\hspace{-0.08cm}\sum\limits_{i \in N}\hspace{-0.08cm}\tvi{a_i}}$ using Theorem~\ref{FourierTransformBasedMIPs}.
\end{algoprocedure}

In Figure~\ref{fig:FTBasedCAMechanism}, we present the efficiency of $\tilde{a}$ on 50 GSVM instances for various values of $k$. We see that RWHT achieves a median efficiency of 100\% for $90$ or more FCs. Nevertheless, the main practical issue with this approach is the number of value queries required. As we can see, RWHT needs $102,000$ value queries (39\% of all bundles) to find the $90$ best FCs. For a practical ICA mechanism this is far too many.
\begin{figure}[h!]
        \centering
        \includegraphics[width=1\columnwidth]{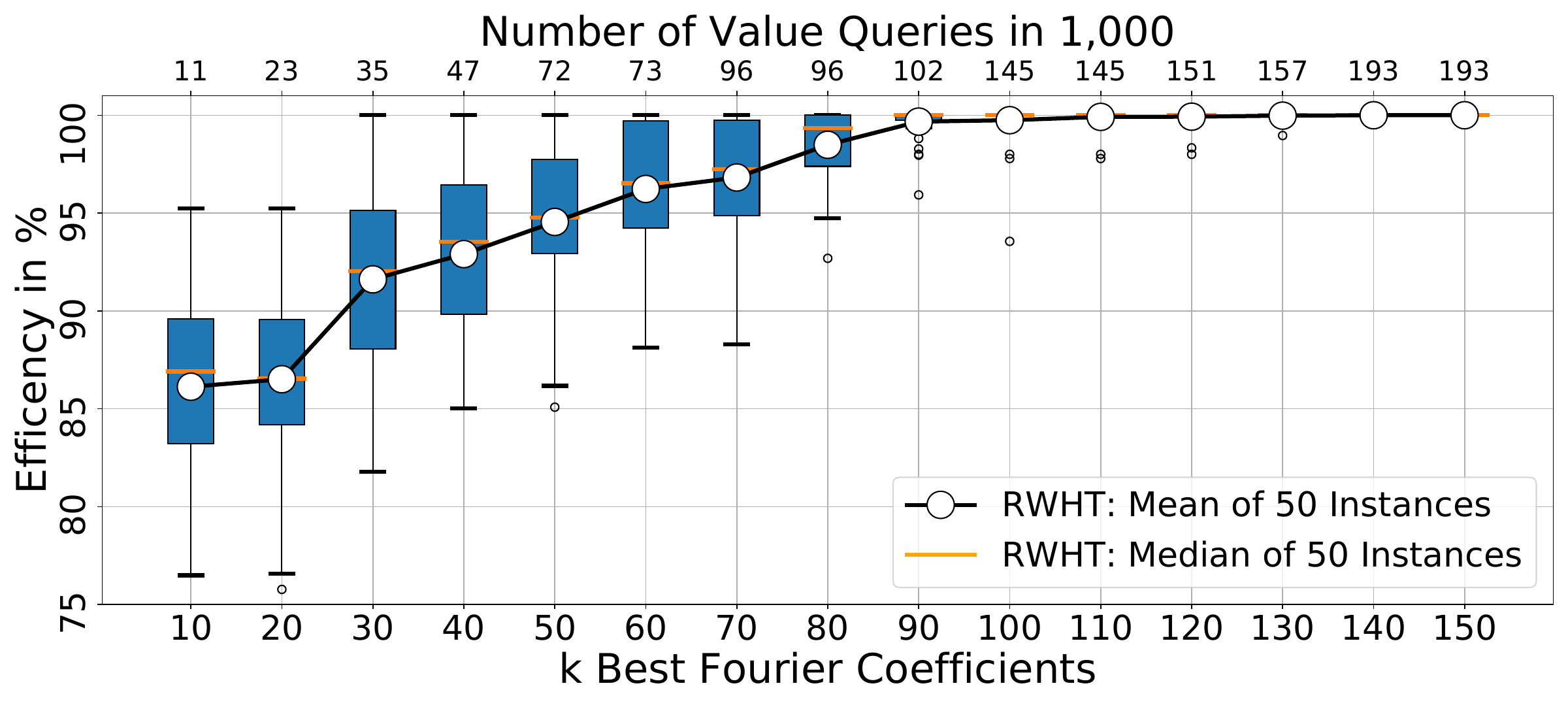} 
        \vskip -0.2cm
        \caption{Efficiency of Procedure 2 in GSVM.}
        \label{fig:FTBasedCAMechanism}
\end{figure}
\section{A Practical Hybrid ICA Mechanism}\label{APracticalHybridCAMechanism}
In this section, we introduce and experimentally evaluate a practical hybrid ICA mechanism, based on FTs \textit{and} NNs. 

\subsection{The Hybrid ICA Mechanism}\label{TheHybridMechanism}
The main issue of the FT-based allocation rule in Section~\ref{AFourierTransformbasedAllocationRule} is the large number of queries, which we now address. The idea is the following: instead of directly applying a sparse FT algorithm (like RWHT) to bidders, we apply it to a NN-based representation. In this way, we query NNs instead of bidders. Based on the FCs of the NNs, we determine a Fourier-sparse approximation $\tvi{}$ with only \emph{few value queries}, where the idea is that the FCs of each NN concentrate on the most dominant FCs of its respective value function. Indeed, recent evidence suggests that a NN trained by SGD can learn the Fourier-support~\cite{rahaman2019spectral}. We analyze our NN support discovery rule in Section~\ref{NNs Support Discovery Experiments}. We now present \textsc{Hybrid Ica}, leaving details of the sub-procedures to \Cref{subsec:appendix_DetailsOfHybridICA}.

\textsc{Hybrid Ica} (Algorithm \ref{HybridICA}) consists of 3 phases: the \emph{MLCA}, the \emph{Fourier reconstruction}, and the \emph{Fourier allocation phase.} It is parameterized by an FT $F$ and the numbers $\pr,\pml,\pfr,\pfa$ of different query types. In total, it  asks each bidder $\sum_{i=1}^4\ell_i$ queries: $\pr$ random initial, $\pml$ MLCA, $\pfr$ Fourier reconstruction, and $\pfa$ Fourier allocation queries.

\mypar{1. MLCA Phase} We first run \textsc{Mlca} such that the NNs can then be trained on ``meaningfully'' elicited reports. In \textsc{Mlca}, we request reports for $\pr$ random initial bundles and for $\pml$ \textit{MLCA queries} (Lines 1-2).
\setlength{\textfloatsep}{5pt}
\begin{algorithm}[t!]
        \DontPrintSemicolon
        \SetKwInOut{parameters}{Params}
        \parameters{$F$ Fourier transform; $\pr,\pml,\pfr,\pfa$ query split}
    Set $\Qinit= \pr$ and $\Qmax= \pr+\pml$\Comment*[r]{\color{blue}{MLCA phase}}
    Run \textsc{Mlca}($\Qinit,\Qmax$); get $\pr + \pml$ reports $R$\; 
    \ForEach(\Comment*[f]{\color{blue}{Fourier reconstr. phase}}){bidder $i\in N$}{
    Fit NN $\Ni{}$ to $R_i$\;
    Determine the best FCs of $\Ni{}$ \Comment*[r]{\textbf{Proc.~3}}
    Compute $\pfr$ reconstruction queries $\Si{2}\subseteq \X$ \Comment*[r]{\textbf{Proc.~4}}
    Ask $\Si{2}$, add reports to $R_i$, and fit $\tvi{}$ to $R_i$\Comment*[r]{\textbf{Proc.~5}} 
    }
    \For(\Comment*[f]{\textcolor{blue}{Fourier alloc. phase\hspace{0.25cm}}}) {$l=1,\ldots,\pfa$}{
    Solve $q\in \argmax_{a \in \F}\sum_{i \in N}\tvi{a_i}\quad \eqref{FourierTransformBasedWDP}$\;
    \ForEach{bidder $i \in N$}{
    \If(\Comment*[f]{Bundle already queried}){$q_i\in R_i$}{ 
        Define $\pF=  \{a\in \F : a_i \neq x, \forall x\in R_i\}$\;
        Resolve $\pq \in \argmax_{a \in \pF}\sum_{i \in N} \tilde{v}_{i}(a_{i})$\;
        Update $q_i = \pqi\;$
        }
    Query bidder $i$'s value for $q_i$ and add report to $R_i$\;
    Fit $\tvi{}$ to $R_i$ \Comment*[r]{\textbf{Proc.~5}} 
    }
    }
    From $R$ compute: $a^*_{R}$ as in \cref{WDPFiniteReports}, VCG payments $p(R)$ \;
    \Return{Final allocation $a^*_{R}$ and VCG payments $p(R)$}
    \caption{\small \textsc{Hybrid Ica}}
    \label{HybridICA}
\end{algorithm}

\mypar{2. Fourier Reconstruction Phase} Next, we compute a Fourier-sparse approximation $\tvi{}$. For this, we first fit a NN $\Ni{}$ on the reports $R_i$ (Line 4). Then we compute the best FCs of the fitted NNs (Line 5, \textbf{Procedure 3}) in order to discover which FCs are important to represent the bidders. Based on these FCs, we determine $\pfr$ \textit{Fourier reconstruction queries} $\Si{2}$ (Line 6, \textbf{Procedure 4}), send them to the bidders and fit $\tvi{}$ to the reports $R_i$ received so far (Line 7, \textbf{Procedure 5}).

\mypar{3. Fourier Allocation Phase} We use the fitted $\tvi{}$ to generate $\pfa$ \textit{Fourier allocation queries}. Here, we solve the FT-based WDP (Line 9) to get candidate queries $q$, ensure that all queries are new (Lines 11--14), add the received reports to $R_i$ (Line 15) and refit $\tvi{}$ (Line 16). Finally in Line 17, \textsc{Hybrid Ica} computes based on all reports $R$ a welfare-maximizing allocation $a^*_R$ and VCG payments $p(R)$ (see \Cref{subsec:appendix_VCGpayments}).

\mypar{Experiment Setup} For \textsc{Hybrid Ica} and \textsc{Mlca}, we use the NN architectures from \cited{weissteiner2020deep} and set a total query budget of $100$ (GSVM, LSVM) and $500$ (MRVM). For \textsc{Hybrid Ica}, we optimized the FTs and query parameters $\ell_i$ on a training set of CA instances. Table~\ref{tab:HybridICABestConfig} shows the best configurations.
\begin{table}[h!]
\centering
\begin{sc}
\resizebox{1\columnwidth}{!}{
	\begin{tabular}{llccccc}
	    \toprule
		& NN Architectures  & FT & $\pr$ & $\pml$ & $\pfr$ & $\pfa$\\
		\midrule
		GSVM & R:$[32,32]\,|$ N:$[10,10]$ & WHT & $30$ & $21$ & $20$ & $29$\\
		LSVM & R:$[32,32]\, |$ N:$[10,10,10]$ & WHT & $30$ & $30$ & $10$ & $30$\\
		MRVM & L,R,N:$[16,16]$ & WHT & $30$ & $220$ & $0$ & $250$\\
		\bottomrule
	\end{tabular}
}
\vskip -0.2cm
\caption{Best configuration of \textsc{Hybrid Ica}. R:$[d_1,d_2]$ denotes a 3-hidden-layer NN for the regional bidder with $d_1$, and $d_2$ nodes.}
\label{tab:HybridICABestConfig}
\end{sc}
\end{table}
\begin{table*}[ht]
	\renewcommand\arraystretch{1.2}
	\centering
	\begin{sc}
	\resizebox{\textwidth}{!}{%
		\begin{tabular}{
				l
				S[table-format=2.2,table-figures-uncertainty=2, detect-weight]
				S[table-format=2.2]
				S[table-format=1.2]
				S[table-format=2.0]
				S[table-format=1.2]
				S[table-format=3.2,table-figures-uncertainty=2, detect-weight]
				S[table-format=2.2]
				S[table-format=1.2]
				S[table-format=2.0]
				S[table-format=1.2]
				S[table-format=2.2,table-figures-uncertainty=2, detect-weight]
				S[table-format=1.2]
				S[table-format=1.2]
				S[table-format=2.2]
				S[table-format=2.0]
				S[table-format=2.2]
			}
			\toprule
			& \multicolumn{5}{c}{\textbf{GSVM}}               & \multicolumn{5}{c}{\textbf{LSVM}}                 & \multicolumn{6}{c}{\textbf{MRVM}}\\
			\cmidrule(lr){2-6}
			\cmidrule(lr){7-11}
			\cmidrule(lr){12-17}
			& {\textbf{Efficiency}} &  {\textbf{Regional}}  & {\textbf{National}}  &  {\textbf{Rev}} & {\textbf{hrs/}}  &  {\textbf{Efficiency}} & {\textbf{Regional}}  & {\textbf{National}}  & {\textbf{Rev}}  & {\textbf{hrs/}}   &  {\textbf{Efficiency}} & {\textbf{Local}}   & {\textbf{Regional}}  & {\textbf{National}}  & {\textbf{Rev}}  & {\textbf{hrs/}} \\
			\textbf{Mechanism} & {in \textbf{\%}}      &  {in \textbf{\%}} & {in \textbf{\%}} & {in \textbf{\%}} & {\textbf{Inst.}} & { in \textbf{\%}}      & {in \textbf{\%}} & {in \textbf{\%} }&{ in \textbf{\%}} & {\textbf{Inst.}} & {in \textbf{\%} }      & {in \textbf{\%}} & {in \textbf{\%}} & {in \textbf{\%}} & {in \textbf{\%} }&{ \textbf{Inst.}} \\ 
			\midrule
			\textsc{Hybrid Ica} & \ccell99.97(3)  & 94.72  & 5.25 & 81 & 0.81 & \ccell98.74(43) & 89.09 & 9.65 & 78 & 1.95 & \ccell96.63(31) & 0.00 & 1.19 & 95.44 & 36 & 23.88 \\
			\textsc{Mlca}       & 99.17(37)  & 98.11  & 1.06 & 79 & 4.65 & \ccell99.14(42) & 93.40 & 5.75 & 77 & 6.09 & 95.32(32) & 0.00 & 0.53 & 94.79 & 41 & 43.26 \\
			\textsc{Hybrid Ica (no FR)}& 98.30(49)  & 97.94  & 0.36 & 75 & 0.93 & \ccell98.16(60) & 93.83 & 4.33 & 72 &  2.03& \ccell96.63(31) & 0.00 & 1.19 & 95.44 & 36 & 23.88 \\
			\textsc{Hybrid Ica (no FR/FA)}& 98.16(50)        & 97.47  & 0.69 & 75 & 0.71 & 97.75(63)       & 92.78 & 5.27 & 72 & 1.86 & 93.91(36)       & 0.01 & 0.42 & 93.48 & 42 & 14.68 \\
			\midrule
			{\textit{Efficient Allocation}}&            & 94.75  & 5.25 &    &      &           & 84.03 & 15.97&    &      &           & 0.00 & 2.11 & 97.89 &    & \\
			\bottomrule 
		\end{tabular}
	}%
	\vskip -0.2cm
	\caption{\textsc{Hybrid Ica} vs. \textsc{Mlca}, \textsc{Hybrid Ica (no FR)}, and \textsc{Hybrid Ica (no FR/FA)}. All results are averages over a test set of 100 (GSVM and LSVM) and 30 (MRVM) CA instances. For efficiency we give a 95\% confidence interval and mark the best mechanisms in grey.}
	\label{tab:HybridICAResultsSummary}
	\end{sc}
\end{table*}%

\subsection{NNs Support Discovery Experiments}\label{NNs Support Discovery Experiments}
In \textsc{Hybrid ICA} we use the NNs for support discovery where it is key that the FCs of these NNs concentrate on the dominant FCs of its value function, i.e. $\supp(\hNi)\approx\supp(\hvic)$. 

To evaluate the NN-based support discovery (Line 5), we consider the spectral \textit{energy ratio} obtained by dividing the spectral energies of the $k$ frequencies selected from the NN and the $k$ best frequencies (for the WHT the best FCs are the ones with the largest absolute value). Formally, for each bidder $i$, let the $k$ frequencies selected from the NN be $\mathcal{\tilde{S}}_i = \{\tilde{y}^{(1)}, \dots, \tilde{y}^{(k)}\}$ and the best ones be $\mathcal{S}_i^{*} = \{\overset{*}{y}{}^{(1)}, \dots, \overset{*}{y}{}^{(k)}\}$. 
Then, bidder $i$'s \textit{energy ratio} is given by $\sum_{\tilde{y} \in \tilde{S}_i} \hvic(\tilde{y})^2/\sum_{\overset{*}{y}{}\in \mathcal{S}_i^{*}} \hvic(\overset{*}{y}{})^2\in [0,1]$ (see \Cref{subsec:appendix_NN_support_discovery} for details).
This ratio is equal to one if $\tilde{S}_i=\mathcal{S}_i^{*}$.
Figure~\ref{fig:energy_ratio} shows that the NN-based supports are almost on par with the best supports given a fixed budget of $k$ frequencies.
\begin{figure}[h!]
        \centering
        \includegraphics[width=1\columnwidth]{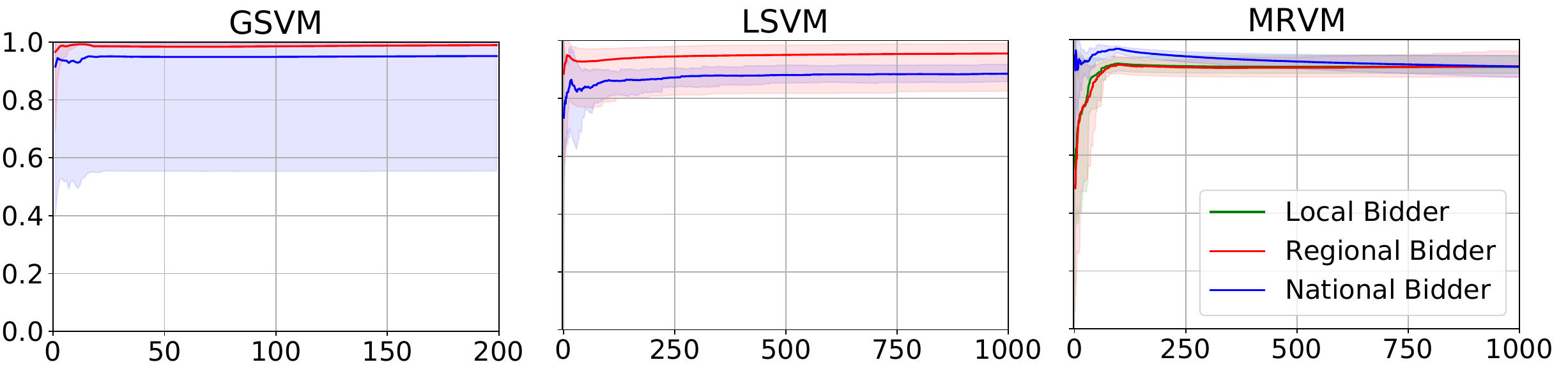}
        \vskip -0.2cm
        \caption{Average energy ratio (y-axis) with 97.5\% and 2.5\% empirical quantiles for a number of selected frequencies $k$ (x-axis) over 30 instances in GSVM and LSVM and over 5 instances in MRVM.}
        \label{fig:energy_ratio}
\end{figure}
\subsection{Efficiency Experiments}\label{Experiments}
We now evaluate the efficiency of \textsc{Hybrid Ica} vs \textsc{Mlca}.

\mypar{Results} Table~\ref{tab:HybridICAResultsSummary} contains our main results in all domains.\footnote{All experiments were conducted on machines with Intel Xeon E5 v4 2.20GHz processors with 24 cores and 128GB RAM or with Intel E5 v2 2.80GHz processors with 20 cores and 128GB RAM.} We show efficiency, distribution of efficiency to bidder types, revenue ${\scriptsize(\sum_{i \in N}p(R)_i)/V(a^*)}$, and runtime. First, we see that \textsc{Hybrid Ica} \emph{statistically significantly outperforms} \textsc{Mlca} \emph{w.r.t. efficiency} in GSVM and MRVM and performs on par in LSVM. Second, it also leads to a computational speedup ($\times6$ GSVM, $\times3$ LSVM, $\times2$ MRVM). The reason for this computational speedup is that the generation of
the $\ell_3 + \ell_4$ Fourier queries (estimating the superset of the support using RWHT on the NNs, fitting the FT models using
compressive sensing and solving our new FT-based MIPs) is
faster than the generation of the NN-based MLCA allocation
queries (training NNs and solving the NN-based MIP).
Third, it distributes the welfare more evenly (= fairer) to bidder types.\footnote{We consider an allocation to be ``fairer'' if its social welfare
is more evenly distributed among bidder types. This is similar (but not identical) to the standard notion of egalitarian
social welfare.} This also leads to a distribution that more closely resembles that of the efficient allocation (see \textit{Efficient Allocation}). We present full efficiency path plots for the different phases of \textsc{Hybrid Ica} in \Cref{subsec:appendix_efficiency_experiments}.

\mypar{Fourier queries}  To verify the importance of the $\pfr$ \textit{Fourier reconstruction} and $\pfa$ \textit{Fourier allocation queries}, we also present \textsc{Hybrid Ica (no FR)} and \textsc{Hybrid Ica (no FR/FA)}, which use random queries in place of the $\pfr$ Fourier reconstruction and the $\pfr+\pfa$ Fourier-based queries. As we see in Table~\ref{tab:HybridICAResultsSummary}, using the Fourier queries leads to significantly better efficiency and \textsc{Hybrid Ica (no FR/FA)} does not achieve a fairer efficiency distribution. A comparison of \textsc{Hybrid Ica} to \textsc{Hybrid Ica (no FR)} reveals that, in GSVM and LSVM, the Fourier reconstruction queries cause the fairer distribution. We empirically verified that these queries are composed of larger bundles (on avg. $17$ items vs. $4$ in MLCA queries) and thus allocate large bundles to bidders that would have been overlooked. In MRVM, the optimal query split for \textsc{Hybrid Ica} uses $\pfr=0$ Fourier reconstruction queries such that \textsc{Hybrid Ica} is equal to \textsc{Hybrid Ica (no FR)}. Thus, in MRVM, \textsc{Hybrid Ica}'s increased efficiency and fairer distribution results from the Fourier allocation queries.

Overall, we see that our Fourier-based auction is especially powerful in sparse domains. In practice, bidders are often limited by their cognitive abilities \cite{scheffel2012impact} or use a low-dimensional computational model to represent their value function. Thus, their reported preferences typically exhibit only a limited degree of substitutability and complementarity, which is captured well by Fourier-sparsity.
\section{Conclusion}\label{Conclusion}
We have introduced Fourier analysis for the design of CAs. The main idea was to represent value functions using Fourier-sparse approximations, providing us with a new lens on bidder's values in the Fourier domain.

On a technical level, we have derived succinct MIPs for the Fourier transform-based WDPs, which makes computing Fourier-based allocation queries practically feasible. We have leveraged this to design a new hybrid ICA that uses NN and Fourier-queries. Our experiments have shown that our approach leads to higher efficiency, a computational speedup and a fairer distribution of social welfare than state-of-the-art.

With this paper, we have laid the foundations for future work leveraging Fourier analysis for representing and eliciting preferences in combinatorial settings.

\section*{Acknowledgments}\label{Acknowledgments}
We thank the anonymous reviewers for helpful comments. This paper is part of a project that has received funding from the European Research Council (ERC) under the European Union’s Horizon 2020 research and innovation programme (Grant agreement No. 805542). Furthermore, this material is based upon work supported by the National Science Foundation under grant no. CMMI-1761163.

\bibliographystyle{named}
\bibliography{references_arXiv}

\begin{thebibliography}{}

\bibitem[\protect\citeauthoryear{Abraham \bgroup \em et al.\egroup
  }{2012}]{abraham2012combinatorial}
I.~Abraham, M.~Babaioff, S.~Dughmi, and T.~Roughgarden.
\newblock Combinatorial auctions with restricted complements.
\newblock In {\em Proceedings of the 13th ACM Conference on Electronic
  Commerce}, 2012.

\bibitem[\protect\citeauthoryear{Amrollahi \bgroup \em et al.\egroup
  }{2019}]{amrollahi2019efficiently}
A.~Amrollahi, A.~Zandieh, M.~Kapralov, and A.~Krause.
\newblock Efficiently learning {F}ourier sparse set functions.
\newblock In {\em Advances in Neural Information Processing Systems}, 2019.

\bibitem[\protect\citeauthoryear{Ausubel and
  Baranov}{2017}]{ausubel2017practical}
L.~M. Ausubel and O.~Baranov.
\newblock A practical guide to the combinatorial clock auction.
\newblock {\em Economic Journal}, 127(605):F334--F350, 2017.

\bibitem[\protect\citeauthoryear{Ausubel and
  Cramton}{2011}]{ausubel2011auction}
L.~Ausubel and P.~Cramton.
\newblock Auction design for wind rights.
\newblock {\em Report to Bureau of Ocean Energy Management, Regulation and
  Enforcement}, 2011.

\bibitem[\protect\citeauthoryear{Ausubel \bgroup \em et al.\egroup
  }{2006}]{ausubel2006clock}
L.~Ausubel, P.~Cramton, and P.~Milgrom.
\newblock The clock-proxy auction: A practical combinatorial auction design.
\newblock In {\em Combinatorial Auctions}, pages 115--138. MIT Press, 2006.

\bibitem[\protect\citeauthoryear{Bernasconi \bgroup \em et al.\egroup
  }{1996}]{bernasconi1996fourier}
A.~Bernasconi, B.~Codenotti, and J.~Simon.
\newblock On the {F}ourier analysis of {B}oolean functions.
\newblock {\em preprint}, pages 1--24, 1996.

\bibitem[\protect\citeauthoryear{Blum \bgroup \em et al.\egroup
  }{2004}]{blum2004preference}
A.~Blum, J.~Jackson, T.~Sandholm, and M.~Zinkevich.
\newblock Preference elicitation and query learning.
\newblock {\em Journal of Machine Learning Research}, 5:649--667, 2004.

\bibitem[\protect\citeauthoryear{Brero \bgroup \em et al.\egroup
  }{2018}]{brero2018combinatorial}
G.~Brero, B.~Lubin, and S.~Seuken.
\newblock Combinatorial auctions via machine learning-based preference
  elicitation.
\newblock In {\em Proceedings of the 27th International Joint Conference on
  Artificial Intelligence}, 2018.

\bibitem[\protect\citeauthoryear{Brero \bgroup \em et al.\egroup
  }{2019}]{brero2019fast}
G.~Brero, S.~Lahaie, and S.~Seuken.
\newblock Fast iterative combinatorial auctions via bayesian learning.
\newblock In {\em Proceedings of the 33rd AAAI Conference of Artificial
  Intelligence}, 2019.

\bibitem[\protect\citeauthoryear{Brero \bgroup \em et al.\egroup
  }{2021}]{brero2021workingpaper}
G.~Brero, B.~Lubin, and S.~Seuken.
\newblock Machine learning-powered iterative combinatorial auctions.
\newblock {\em arXiv preprint arXiv:1911.08042}, Jan 2021.

\bibitem[\protect\citeauthoryear{Cramton}{2013}]{cramton2013spectrumauctions}
P.~Cramton.
\newblock Spectrum auction design.
\newblock {\em Review of Industrial Organization}, 42(2):161--190, 2013.

\bibitem[\protect\citeauthoryear{D{\"u}tting \bgroup \em et al.\egroup
  }{2019}]{dutting2019optimal}
P.~D{\"u}tting, Z.~Feng, H.~Narasimhan, D.~Parkes, and S.~Ravindranath.
\newblock Optimal auctions through deep learning.
\newblock In {\em Proceedings of the 36th International Conference on Machine
  Learning}, 2019.

\bibitem[\protect\citeauthoryear{Efron \bgroup \em et al.\egroup
  }{2004}]{efron2004least}
B.~Efron, T.~Hastie, I.~Johnstone, R.~Tibshirani, et~al.
\newblock Least angle regression.
\newblock {\em The Annals of statistics}, 32(2):407--499, 2004.

\bibitem[\protect\citeauthoryear{Goeree and
  Holt}{2010}]{goeree2010hierarchical}
J.~K. Goeree and C.~A. Holt.
\newblock Hierarchical package bidding: A paper \& pencil combinatorial
  auction.
\newblock {\em Games and Economic Behavior}, 2010.

\bibitem[\protect\citeauthoryear{Heiss \bgroup \em et al.\egroup
  }{2022}]{heiss2022nomu}
Jakob~M Heiss, Jakob Weissteiner, Hanna~S Wutte, Sven Seuken, and Josef
  Teichmann.
\newblock {NOMU}: Neural optimization-based model uncertainty.
\newblock In {\em Proceedings of the 39th International Conference on Machine
  Learning}, volume 162 of {\em Proceedings of Machine Learning Research},
  pages 8708--8758. PMLR, 17--23 Jul 2022.

\bibitem[\protect\citeauthoryear{Hudson and Sandholm}{2003}]{hudson2003using}
B.~Hudson and T.~Sandholm.
\newblock Using value queries in combinatorial auctions.
\newblock In {\em Proceedings of the 4th ACM Conference on Electronic
  commerce}, pages 226--227, 2003.

\bibitem[\protect\citeauthoryear{Lahaie and Parkes}{2004}]{lahaie2004applying}
S.~Lahaie and C.~Parkes.
\newblock Applying learning algorithms to preference elicitation.
\newblock In {\em Proceedings of the 5th ACM Conference on Electronic
  Commerce}, 2004.

\bibitem[\protect\citeauthoryear{Lahaie}{2010}]{lahaie2010kernel}
S.~Lahaie.
\newblock Kernel methods for revealed preference analysis.
\newblock In {\em ECAI}, pages 439--444, 2010.

\bibitem[\protect\citeauthoryear{Milgrom}{2007}]{Milgrom2007PackageAuctionsAndExchanges}
P.~Milgrom.
\newblock Package auctions and exchanges.
\newblock {\em Econometrica}, 75(4):935--965, 2007.

\bibitem[\protect\citeauthoryear{Nisan and
  Segal}{2006}]{nisan2006communication}
N.~Nisan and I.~Segal.
\newblock The communication requirements of efficient allocations and
  supporting prices.
\newblock {\em Journal of Economic Theory}, 2006.

\bibitem[\protect\citeauthoryear{O'Donnell}{2014}]{o2014analysis}
R.~O'Donnell.
\newblock {\em Analysis of boolean functions}.
\newblock Cambridge University Press, 2014.

\bibitem[\protect\citeauthoryear{P{\"u}schel and
  Wendler}{2020}]{puschel2020discrete}
M.~P{\"u}schel and C.~Wendler.
\newblock Discrete signal processing with set functions.
\newblock {\em IEEE Transactions on Signal Processing}, 69:1039--1053, 2020.

\bibitem[\protect\citeauthoryear{Rahaman \bgroup \em et al.\egroup
  }{2019}]{rahaman2019spectral}
N.~Rahaman, A.~Baratin, D.~Arpit, F.~Draxler, M.~Lin, F.~Hamprecht, Y.~Bengio,
  and A.~Courville.
\newblock On the spectral bias of neural networks.
\newblock In {\em International Conference on Machine Learning}, 2019.

\bibitem[\protect\citeauthoryear{Rahme \bgroup \em et al.\egroup
  }{2021}]{rahme2020permutation}
J.~Rahme, S.~Jelassi, J.~Bruna, and M.~Weinberg.
\newblock A permutation-equivariant neural network architecture for auction
  design.
\newblock In {\em Proceedings of the 35th AAAI Conference on Artificial
  Intelligence}, 2021.

\bibitem[\protect\citeauthoryear{Scheffel \bgroup \em et al.\egroup
  }{2012}]{scheffel2012impact}
T.~Scheffel, G.~Ziegler, and M.~Bichler.
\newblock On the impact of package selection in combinatorial auctions: an
  experimental study in the context of spectrum auction design.
\newblock {\em Experimental Economics}, 2012.

\bibitem[\protect\citeauthoryear{Shen \bgroup \em et al.\egroup
  }{2019}]{shen2019automated}
W.~Shen, P.~Tang, and S.~Zuo.
\newblock Automated mechanism design via neural networks.
\newblock In {\em Proceedings of the 18th International Conference on
  Autonomous Agents and MultiAgent Systems}, 2019.

\bibitem[\protect\citeauthoryear{Shen \bgroup \em et al.\egroup
  }{2020}]{shen2020reinforcement}
W.~Shen, B.~Peng, H.~Liu, M.~Zhang, R.~Qian, Y.~Hong, Z.~Guo, Z.~Ding, P.~Lu,
  and P.~Tang.
\newblock Reinforcement mechanism design: With applications to dynamic pricing
  in sponsored search auctions.
\newblock In {\em Proceedings of the 34th AAAI Conference}, 2020.

\bibitem[\protect\citeauthoryear{Stobbe and Krause}{2012}]{stobbe2012learning}
P.~Stobbe and A.~Krause.
\newblock Learning {F}ourier sparse set functions.
\newblock In {\em Artificial Intelligence and Statistics}, 2012.

\bibitem[\protect\citeauthoryear{Weiss \bgroup \em et al.\egroup
  }{2017}]{weiss2017sats}
M.~Weiss, B.~Lubin, and S.~Seuken.
\newblock Sats: A universal spectrum auction test suite.
\newblock In {\em Proceedings of the 16th Conference on Autonomous Agents and
  Multi-Agent Systems}, 2017.

\bibitem[\protect\citeauthoryear{Weissteiner and
  Seuken}{2020}]{weissteiner2020deep}
Jakob Weissteiner and Sven Seuken.
\newblock Deep learning—powered iterative combinatorial auctions.
\newblock {\em Proceedings of the AAAI Conference on Artificial Intelligence},
  34(02):2284--2293, Apr. 2020.

\bibitem[\protect\citeauthoryear{Weissteiner \bgroup \em et al.\egroup
  }{2022a}]{weissteiner2022monotone}
Jakob Weissteiner, Jakob Heiss, Julien Siems, and Sven Seuken.
\newblock Monotone-value neural networks: Exploiting preference monotonicity in
  combinatorial assignment.
\newblock In {\em Proceedings of the Thirty-First International Joint
  Conference on Artificial Intelligence, {IJCAI-22}}, pages 541--548.
  International Joint Conferences on Artificial Intelligence Organization, 7
  2022.
\newblock Main Track.

\bibitem[\protect\citeauthoryear{Weissteiner \bgroup \em et al.\egroup
  }{2022b}]{weissteiner2022fourier}
Jakob Weissteiner, Chris Wendler, Sven Seuken, Ben Lubin, and Markus Püschel.
\newblock Fourier analysis-based iterative combinatorial auctions.
\newblock In {\em Proceedings of the Thirty-First International Joint
  Conference on Artificial Intelligence, {IJCAI-22}}, pages 549--556.
  International Joint Conferences on Artificial Intelligence Organization, 7
  2022.
\newblock Main Track.

\bibitem[\protect\citeauthoryear{Weissteiner \bgroup \em et al.\egroup
  }{2023}]{weissteiner2023bayesian}
Jakob Weissteiner, Jakob Heiss, Julien Siems, and Sven Seuken.
\newblock Bayesian optimization-based combinatorial assignment.
\newblock {\em Proceedings of the AAAI Conference on Artificial Intelligence},
  37, 2023.

\bibitem[\protect\citeauthoryear{Wendler \bgroup \em et al.\egroup
  }{2021}]{wendler2020learning}
C.~Wendler, A.~Amrollahi, B.~Seifert, A.~Krause, and M.~P{\"u}schel.
\newblock Learning set functions that are sparse in non-orthogonal {F}ourier
  bases.
\newblock In {\em Proceedings of the 35th AAAI Conference on Artificial
  Intelligence}, 2021.

\end{thebibliography}

\clearpage
\appendix
\section*{\centering Appendix}
\section{Preliminaries}

\subsection{A Machine Learning powered ICA}\label{subsec:appendix_AnNNPoweredICAMechanism}
In this section, we provide a detailed review of the \textit{machine learning-powered combinatorial auction (MLCA)} introduced by \cited{brero2021workingpaper}. We describe MLCA using NNs $\Ni{}$ as the generic ML algorithm $\mathcal{A}_i$ (see \cite{weissteiner2020deep}).

At the core of MLCA is a \textit{query module} (Algorithm~\ref{QueryModule}), which, for each bidder $i\in I$, determines a new value query $q_i$. First, in the \textit{estimation step} (Line 1), NNs are used to learn bidders' valuations from reports $R_i$. Next, in the \textit{optimization step} (Line 2), a \textit{NN-based WDP} is solved to find a candidate $q$ of value queries (see \cite{weissteiner2020deep} for details on the NN-based \textit{estimation} and \textit{optimization step}). Finally, if $q_i$ has already been queried before (Line 4), another, more restricted NN-based WDP (Line 6) is solved and $q_i$ is updated correspondingly. This ensures that all final queries $q$ are new.
\setlength{\textfloatsep}{5pt}
\begin{algorithm}[h!]
        \DontPrintSemicolon
        \SetKwInOut{procedure}{Function}
        \SetKwInOut{parameters}{Parameters}
        \SetKwInOut{inputs}{Inputs}
        \procedure{~\textit{NextQueries}$(I,R)$}
        \inputs{~Index set of bidders $I$ and reported values $R$}
        \parameters{~Neural networks $\Ni{}:\X\to \R_+,\, i\in N$}
    \lForEach(\Comment*[f]{\color{blue}Estimation step}){$i \in I$}{
    \hspace{-0.07cm}Fit $\Ni{}$ on $R_i$: $\Ni{}[R_i]$
    }
    Solve $q \in \argmax\limits_{a \in {\F}}\sum\limits_{i \in I} \Ni{}[R_i](a_i)$ \hspace{-0.03cm}\Comment*[r]{\color{blue}Optimization step}
    \ForEach{$i \in I$}{
        \If(\Comment*[f]{\color{blue} Bundle already queried}){$q_i\in R_i$}{ 
        Define $\pF=  \{a\in \F : a_i \neq x, \forall x\in R_i\}$\;
        Re-solve $\pq \in \argmax_{a \in \pF}\sum_{l \in I} \mathcal{N}_l{}[R_l](a_l)$\;
        Update $q_i = \pqi\;$
        }
    }
    \Return{profile of new queries $q=(q_1,\ldots,q_n)$}
    \caption{\textsc{NN-Query Module}\, {\scriptsize (Brero et al. 2021)}}
    \label{QueryModule}
\end{algorithm}

\setlength{\textfloatsep}{5pt}
\begin{algorithm}[t!]
        \DontPrintSemicolon
        \SetKwInOut{parameters}{Params}
        \parameters{$t=1,\Qinit,\Qmax$ {init. and max \#queries per bidder}}
    \ForEach{$i \in N$}{Receive reports $R_i$ for $\Qinit$ randomly drawn bundles}
    \While(\Comment*[f]{\color{blue}Round iterator}){$t\leq \floor{(\Qmax-\Qinit)/n}$}{
        $\qnew=$ \textit{NextQueries$(N,R)$} \Comment*[r]{\color{blue}Main economy queries}
        \ForEach(\Comment*[f]{\color{blue}Marginal economy queries}){bidder $i \in N$}{
            $\qnew=\qnew\cup$ \textit{NextQueries$(N\setminus\{i\},R_{-i})$}
        }
        \ForEach{bidder $i \in N$}{
         Receive reports $\Rnewi$ for $\qnew_i$, set $R_i=R_i\cup\Rnewi$
        }
        $t = t+1$
    }
    From elicited reports $R$ compute $a^*_{R}$ as in \Cref{WDPFiniteReports}\;
    From elicited reports $R$ compute VCG-payments $p(R)$\;
    \Return{Final allocation $a^*_{R}$ and payments $p(R)$}
    \caption{\small \textsc{Mlca}\, {\scriptsize (Brero et al. 2021)}}
    \label{MLCA}
\end{algorithm}

In Algorithm \ref{MLCA}, we present \textsc{Mlca} in a slightly abbreviated form. Let $R_{-i}=(R_1,\ldots,R_{i-1},R_{i+1},\ldots, R_n)$. \textsc{Mlca} proceeds in rounds until a maximum number of queries per bidder $\Qmax$ is reached. In each round, it calls Algorithm \ref{QueryModule} $n+1$ times: once including all bidders (Line 4, \textit{main economy}) and $n$ times excluding one bidder (Lines 5--6, \textit{marginal economies}). At the end of each round, the mechanism receives reports $\Rnew$ from all bidders for the newly generated queries $\qnew$, and updates the overall elicited reports $R$ (Lines 7--8). In Lines 10--11, \textsc{Mlca} computes an allocation $a^*_R$ that maximizes the reported social welfare and determines VCG payments $p(R)$ (see \Cref{subsec:appendix_VCGpayments}).

\subsection{VCG Payments from Reports}\label{subsec:appendix_VCGpayments}
In this section, we provide a recap on how to compute VCG payments from bidder's reports.
Let $R=(R_1,\ldots,R_n)$ denote an elicited set of reported bundle-value pairs from each bidder obtained from \textsc{Mlca} (Algorithm~\ref{MLCA}) or \textsc{Hybrid Ica} (Algorithm~\ref{HybridICA}) and let $R_{-i}\coloneqq(R_1,\ldots,R_{i-1},R_{i+1},\ldots,R_n)$. We then calculate the VCG payments $p(R)=(p(R)_1\ldots,p(R)_n) \in \R_+^n$ as follows:
\begin{definition}{\textsc{(VCG Payments from Reports)}}
\begin{align}\label{VCGPayments}
&p(R)_i \coloneqq \hspace{-0.2cm}\sum_{j \in N \setminus \{i\}} \pvj{}\left(\left(a^*_{R_{-i}}\right)_j\right) - \hspace{-0.2cm}\sum_{j \in N \setminus \{i\}}\pvj{}\left(\left(a^*_{R}\right)_j\right).
\end{align}
where $a^*_{R_{-i}}$ maximizes the reported social welfare when excluding bidder $i$, i.e.,
\begin{align}
&a^*_{R_{-i}}\in \argmax_{a \in \F} \hV{a|R_{-i}} = \argmax_{a \in \F}\hspace{-0.4cm}\sum_{\substack{j \in N\setminus\{i\}:\\ \left(a_j,\pvj{a_j}\right)\in R_j}}\hspace{-0.4cm}\pvj{a_j},
\end{align}
and $a^*_R$ is a reported-social-welfare-maximizing allocation (including all bidders), i.e,
\begin{align}
&a^*_{R}\in \argmax_{a \in \F} \hV{a|R} = \argmax_{a \in \F} \hspace{-1.3cm}\sum_{\hspace{1cm}i \in N:\, \left(a_i,\pvi{a_i}\right)\in R_i}\hspace{-1.3cm}\pvi{a_i}.
\end{align}
\end{definition}
As argued by \cited{brero2021workingpaper}, using such payments are key for \textsc{Mlca} to induce ``good'' incentives for bidders to report truthfully. As their incentive analysis also applies to \textsc{Hybrid Ica}, we use them in our design too.

\section{Fourier Transform-based WDPs}
In this section, we present the proofs of Lemma~\ref{SuccinctRepresentations} and Theorem~\ref{FourierTransformBasedMIPs}.

Let $1_d$ and $0_d$ for $d\in \N$ denote the $d$-dimensional vector of ones and zeros, respectively. For all $x,y \in \R^d,$ let $x\leq y,\, \max(x,y),\, \min(x,y)$ and $\left(-1\right)^x$  be defined component-wise, and let $\langle \cdot,\cdot \rangle$ denote the Euclidean scalar product.

We consider the matrix representation $F$ and $F^{-1}$ of the considered FTs from \cited{puschel2020discrete} given by: 

\begin{align}
\textrm{\textbf{FT3:}}\hspace{0.5cm}&F_{y,x} = (-1)^{|y| - |x|} \mathbb{I}_{\min(x,y) = x},\label{appendix_FT3formula}&\\
 &F^{-1}_{x,y} = \mathbb{I}_{\min(x,y) = y},\label{appendix_FT3formulaINV}&\\
\textrm{\textbf{FT4:}}\hspace{0.5cm}&F_{y,x} = (-1)^{|\min(x,y)|} \mathbb{I}_{\max(x,y) = 1_m},\label{appendix_FT4formula}&\\
 &F^{-1}_{x,y} = \mathbb{I}_{\min(x,y) = 0_m},\label{appendix_FT4formulaINV}&\\
\textrm{\textbf{WHT:}}\hspace{0.5cm}&F_{y,x} = \frac{1}{2^m} (-1)^{|\min(x, y)|},\label{appendix_WHTformula}\\
 &F^{-1}_{x,y} = (-1)^{|\min(x,y)|}.\label{appendix_WHTformulaINV}
\end{align}

The Fourier-sparse approximations used in the WDPs are defined in terms of $F^{-1}$. 

\subsection{Proof of Lemma~\ref{SuccinctRepresentations}}\label{subsec:appendix_Proof_Lemma_1}
\setcounter{lemma}{0}
\begin{lemma}{\textsc{(Succinct Representations)}}
For $i\in N$ let $\Si{1} = \{y^{(1)}, \dots, y^{(k)}\}$ be the support of a $k$-Fourier-sparse approximation $\tvi{}$ and $W_i \in \{0, 1\}^{k \times m}$ be defined as $\left(W_i\right)_{l,j} = \mathbb{I}_{y^{(l)}_j=1}$. Then $\tvi{}$ can be rewritten as:
\begin{align}
\textrm{\textbf{FT3:}}\hspace{0.2cm}&\tvi{x} = \left\langle \htvSic, \max\left(0_k, 1_k\hspace{-0.07cm} -\hspace{-0.06cm} W_i(1_m\hspace{-0.06cm}-\hspace{-0.06cm}x)\right) \right\rangle\label{appendix_FT3}\\
\textrm{\textbf{FT4:}}\hspace{0.2cm}&\tvi{x} = \left\langle \htvSic, \max\left(0_k, 1_k - W_i x\right) \right\rangle\label{appendix_FT4}\\
\textrm{\textbf{WHT:}}\hspace{0.2cm}&\tvi{x} = \left\langle \htvSic, (-1)^{W_i x}\right\rangle \label{appendix_WHT}.
\end{align}


\end{lemma}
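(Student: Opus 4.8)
The plan is to establish each of the three identities by expanding the Fourier-sparse approximation $\tvi{x} = \sum_{y \in \Si{1}} F^{-1}_{x,y}\,\htvi(y)$ with the appropriate inverse transform matrix from \eqref{appendix_FT3formulaINV}, \eqref{appendix_FT4formulaINV}, \eqref{appendix_WHTformulaINV}, and then showing that the scalar factor $F^{-1}_{x,y^{(l)}}$ equals the $l$-th component of the vector appearing on the right-hand side. Since $\htvSic = (\htvi(y^{(l)}))_{l=1}^k$ by definition, matching the $l$-th summand for every $l$ immediately yields the claimed inner-product form. So the whole lemma reduces to three pointwise index computations, one per transform.

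For \textbf{WHT}, $F^{-1}_{x,y^{(l)}} = (-1)^{|\min(x,y^{(l)})|}$. I would observe that $|\min(x,y^{(l)})| = \sum_{j \in M} x_j\, \mathbb{I}_{y^{(l)}_j = 1} = (W_i x)_l$, using the definition $(W_i)_{l,j} = \mathbb{I}_{y^{(l)}_j=1}$ and the fact that $x \in \{0,1\}^m$. Hence $F^{-1}_{x,y^{(l)}} = (-1)^{(W_i x)_l}$, which is exactly the $l$-th component of $(-1)^{W_i x}$ (componentwise exponentiation), giving \eqref{appendix_WHT}. For \textbf{FT4}, $F^{-1}_{x,y^{(l)}} = \mathbb{I}_{\min(x,y^{(l)}) = 0_m}$, i.e.\ the indicator that $x$ and $y^{(l)}$ are disjoint, equivalently that $\sum_j x_j \mathbb{I}_{y^{(l)}_j=1} = (W_i x)_l = 0$. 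Since $(W_i x)_l$ is a nonnegative integer, this indicator equals $\max(0, 1 - (W_i x)_l)$, which is the $l$-th component of $\max(0_k, 1_k - W_i x)$; summing against $\htvSic$ gives \eqref{appendix_FT4}.

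For \textbf{FT3}, $F^{-1}_{x,y^{(l)}} = \mathbb{I}_{\min(x,y^{(l)}) = y^{(l)}}$, i.e.\ the indicator that $y^{(l)} \subseteq x$ as sets. The trick is to dualize: $y^{(l)} \subseteq x$ iff $y^{(l)}$ is disjoint from the complement $1_m - x$, i.e.\ iff $\sum_j (1-x_j)\mathbb{I}_{y^{(l)}_j=1} = (W_i(1_m - x))_l = 0$. As before this nonnegative-integer condition lets me write the indicator as $\max(0, 1 - (W_i(1_m-x))_l)$, the $l$-th component of $\max(0_k, 1_k - W_i(1_m - x))$, yielding \eqref{appendix_FT3}.

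The only genuine subtlety — really the single recurring observation — is the rewriting of a $\{0,1\}$-valued indicator of an emptiness/containment condition as a $\max(0, 1 - (\text{count}))$ expression; this is valid precisely because the counts $(W_i x)_l$ and $(W_i(1_m-x))_l$ are integers $\ge 0$, so I should state that nonnegativity-and-integrality fact once and reuse it. Everything else is bookkeeping: confirming $|\min(x,y^{(l)})| = (W_i x)_l$ for $0/1$ vectors, and that "$y^{(l)}\subseteq x$" $\Leftrightarrow$ "$y^{(l)}\cap(\bar x)=\emptyset$". No step looks like it will present a real obstacle.
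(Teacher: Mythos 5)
Your proposal is correct and follows essentially the same route as the paper's proof: expand $\tvi{x}=\sum_{y\in\Si{1}}F^{-1}_{x,y}\htvi(y)$, identify $|\min(x,y)|=y^{T}x=(W_ix)_l$ via the rows of $W_i$, rewrite the disjointness/containment indicators as $\max(0,1-\cdot)$, and for FT3 dualize $y\subseteq x$ to disjointness from $1_m-x$, exactly as the paper does. No gaps; the nonnegativity-and-integrality remark you flag is the same justification the paper uses implicitly.
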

\begin{proof}
In this proof we are going to make use of the equivalence between bundles (= indicator vectors) and sets. Recall that $x, y \in \X=\{0,1\}^m$ are bundles and bundles correspond to subsets of items. We can translate set operations such as intersection, union and complement to indicator vectors as follows:
\begin{align}
&x \cap y = \min(x, y),\label{intersection}\\
&x \cup y = \max(x, y),\label{union}\\
&(x)^c = 1_m - x,\label{complement}
\end{align}
where we slightly abused notation by identifying the indicator vectors with their corresponding subsets on the left hand sides of \eqref{intersection}--\eqref{complement}. Furthermore, it holds that:
\begin{align}
|\min(x, y)| = y^T x = \sum_{i = 1}^m x_i y_i
\end{align}

\mypar{FT3} Let $F$ denote the matrix representation of FT3 in \eqref{appendix_FT3formula}. 
By definition, we have 
\begin{align}
    \tvi{x} =& \sum_{y \in \Si{1}} F^{-1}_{x,y} \htvi(y)\\ 
    =&\sum_{y \in  \Si{1}}\mathbb{I}_{\min(x, y) = y}\htvi(y)\\
    =&\sum_{y \in  \Si{1}}\mathbb{I}_{\min(1_m - x, y) = 0_m}\htvi(y)\\
    =& \langle (\mathbb{I}_{\min(1_m - x, y) = 0_m})_{y \in \Si{1}}, \htvSic \rangle,
\end{align}
where we used $x \cap y = y \Leftrightarrow x^c \cap y = \emptyset$. Now, the claim follows by observing that 
\begin{align}
\mathbb{I}_{\min(1_m - x, y) = 0_m} = \max\left(0, 1 - y^T (1_m - x\right)),
\end{align}
which is a direct consequence of $|\min(1_m - x, y)|=y^T (1_m - x)$, and recalling that for each $y \in \Si{1}$ there is a respective row $y^T$ in $W_i$.\\

\mypar{FT4} Let $F$ be the matrix representation of FT4 in \eqref{appendix_FT4formula}. By definition, we have 
\begin{align}
    \tvi{x} =& \sum_{y \in \Si{1}} F^{-1}_{x,y} \htvi(y)\\ =&\sum_{y \in \Si{1}} \mathbb{I}_{\min(x, y) = 0_m}\htvi(y)\\
    =& \langle (\mathbb{I}_{\min(x, y) = 0_m})_{y \in \Si{1}}, \htvSic \rangle.
\end{align}
Now, the claim follows by observing that 
\begin{align}
\mathbb{I}_{\min(x, y) = 0_m} = \max\left(0, 1 - y^T x\right),
\end{align}
which is a direct consequence of $|\min(x, y)|=y^T x $, and recalling that for each $y \in \Si{1}$ there is a respective row $y^T$ in $W_i$.\\

\mypar{WHT} Let $F$ be the matrix representation of WHT in \eqref{appendix_WHTformula}. By definition, we have
\begin{align}
    \tvi{x} =& \sum_{y \in \Si{1}} F^{-1}_{x,y} \htvi(y)\\
    =& \sum_{y \in \Si{1}} (-1)^{y^T x}\htvi(y)\\
    =& \langle ((-1)^{y^T x})_{y \in \Si{1}}, \htvSic \rangle.
\end{align}
Now, the claim follows by recalling that for each $y \in \Si{1}$ there is a respective row $y^T$ in $W_i$.
\end{proof}

\subsection{Proof of Theorem 1}\label{subsec:appendix_Proof_Thm1}
We first state and proof two elementary lemmata. 

For $x\in \R^d$ let $x \Mod 2$  be defined component-wise.
\begin{lemma}{\textsc{(Max Representation)}}\label{MaxAsLinearConstraints}
Let $\zeta,\eta \in \R^d$ for $d \in \N$ and $C>0$ such that for all $l\in \{1,\ldots,d\}$ it holds that $|\zeta_l-\eta_l|\le C$. Let $\alpha:= \max(\zeta,\eta)$ and consider the polytope $\mathcal{P}$ in $(\talpha,\beta)$ defined by \eqref{maxpolytopeStart}--\eqref{maxpolytopeEnd}
\begin{align}
    &\talpha\ge \eta\label{maxpolytopeStart}\\
    &\talpha\le \eta + C\beta\\
    &\talpha\ge \zeta\\
    &\talpha\le \zeta + C(1_d-\beta)\\
    &\beta\in \{0,1\}^{d}.\label{maxpolytopeEnd}
\end{align}
Then it holds that $\mathcal{P}\neq\emptyset$ and every element $(\talpha,\beta)\in \mathcal{P}$ satisfies $\talpha=\alpha.$
\end{lemma}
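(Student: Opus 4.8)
\textbf{Proof plan for Lemma~\ref{MaxAsLinearConstraints} (Max Representation).}

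The plan is to argue in two directions: first that the polytope $\mathcal{P}$ is nonempty, and second that any feasible point must have its $\talpha$-component equal to the true componentwise maximum $\alpha=\max(\zeta,\eta)$. Since all constraints \eqref{maxpolytopeStart}--\eqref{maxpolytopeEnd} decouple coordinatewise, I would fix an index $l\in\{1,\dots,d\}$ and reason about the scalar quantities $\zeta_l,\eta_l,\talpha_l,\beta_l$ alone; the full vector statement then follows by taking the conjunction over all $l$.

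For nonemptiness, I would exhibit an explicit feasible point: set $\talpha=\alpha$, and for each coordinate $l$ set $\beta_l=0$ if $\eta_l\ge\zeta_l$ and $\beta_l=1$ otherwise (ties may be broken either way). Then I check the four inequalities coordinatewise. When $\beta_l=0$, constraints \eqref{maxpolytopeStart}--\eqref{maxpolytopeEnd} reduce to $\talpha_l\ge\eta_l$, $\talpha_l\le\eta_l$, $\talpha_l\ge\zeta_l$, $\talpha_l\le\zeta_l+C$; with $\talpha_l=\eta_l=\max(\zeta_l,\eta_l)$ the first three are immediate, and the last holds because $\eta_l-\zeta_l=|\eta_l-\zeta_l|\le C$ in this case. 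The case $\beta_l=1$ is symmetric, using $\zeta_l-\eta_l\le C$. Hence $(\alpha,\beta)\in\mathcal{P}$, so $\mathcal{P}\neq\emptyset$.

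For the forced-value direction, take an arbitrary $(\talpha,\beta)\in\mathcal{P}$ and fix a coordinate $l$. Since $\beta_l\in\{0,1\}$, consider the two cases. If $\beta_l=0$: constraint \eqref{maxpolytopeStart} combined with the second inequality $\talpha_l\le\eta_l+C\beta_l=\eta_l$ pins $\talpha_l=\eta_l$, and the third inequality $\talpha_l\ge\zeta_l$ then gives $\eta_l\ge\zeta_l$, so $\talpha_l=\eta_l=\max(\zeta_l,\eta_l)=\alpha_l$. If $\beta_l=1$: the third and fourth inequalities give $\zeta_l\le\talpha_l\le\zeta_l+C(1-\beta_l)=\zeta_l$, so $\talpha_l=\zeta_l$, and then the first inequality $\talpha_l\ge\eta_l$ forces $\zeta_l\ge\eta_l$, hence $\talpha_l=\zeta_l=\alpha_l$. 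In both cases $\talpha_l=\alpha_l$; since $l$ was arbitrary, $\talpha=\alpha$.

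I do not anticipate a genuine obstacle here — the statement is essentially the standard big-$M$ linearization of a maximum, and the hypothesis $|\zeta_l-\eta_l|\le C$ is exactly what is needed to make the "slack" side of each pair of inequalities consistent. The only point requiring a little care is checking that the constraints genuinely decouple (so that a per-coordinate choice of $\beta_l$ is legitimate) and that the tie case $\zeta_l=\eta_l$ causes no trouble, which it does not since then $\max(\zeta_l,\eta_l)=\zeta_l=\eta_l$ and either choice of $\beta_l$ works in both directions.
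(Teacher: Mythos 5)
Your proof is correct and follows essentially the same route as the paper: a coordinatewise case analysis showing feasibility of $(\alpha,\beta)$ with the natural choice of $\beta$, and that the binary $\beta_l$ together with the big-$M$ bound $|\zeta_l-\eta_l|\le C$ pins $\talpha_l$ to $\max(\zeta_l,\eta_l)$. The only cosmetic difference is that you case on $\beta_l\in\{0,1\}$ while the paper cases on the sign of $\zeta_l-\eta_l$; the substance is identical, and your version is if anything spelled out more carefully.
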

\begin{proof}
Non-emptiness follows from the assumption that $|\zeta_l-\eta_l|\le C$ for all $l\in \{1,\ldots,d\}$. For any $l\in \{1,\ldots,d\}$ we have to distinguish the following cases:
    \begin{align*}
    &\zeta_l<\eta_l \implies \beta_l=0,\,\talpha_l=\eta_l=\max(\zeta_l,\eta_l)=\alpha_l\\
    &\zeta_l>\eta_l \implies \beta_l=1,\, \talpha_l=\zeta_l=\max(\zeta_l,\eta_l)=\alpha_l\\
    &\zeta_l=\eta_l \implies \talpha_l=\zeta_l=\eta_l=\max(\zeta_l,\eta_l)=\alpha_l
    \end{align*}
    This yields that $\talpha=\alpha$.
\end{proof}

\begin{lemma}{\textsc{(Odd-Even Representation)}}\label{OddEvenAsLinearConstraints}
Let $\zeta \in \Z^d$ for $d \in \N$. Let $\beta:= \zeta \Mod 2 \in \{0,1\}^d$ and consider the polytope $\mathcal{P}$ in $(\tbeta,\gamma)$ defined by \eqref{maxpolytopeStart2} and \eqref{maxpolytopeEnd2}
\begin{align}
    &\tbeta= \zeta -2\gamma\label{maxpolytopeStart2}\\
    &\tbeta\in \{0,1\}^{d}, \gamma \in \Z^d.\label{maxpolytopeEnd2}
\end{align}
Then it holds that $\mathcal{P}\neq\emptyset$ and every element $(\tbeta,\gamma)\in \mathcal{P}$ satisfies $\tbeta=\beta.$
\end{lemma}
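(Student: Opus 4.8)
The plan is to prove Lemma~\ref{OddEvenAsLinearConstraints} by the same case-based argument used for Lemma~\ref{MaxAsLinearConstraints}, but now exploiting the standard division-with-remainder decomposition of integers. The statement says that the only feasible pair $(\tbeta,\gamma)$ satisfying $\tbeta = \zeta - 2\gamma$ with $\tbeta \in \{0,1\}^d$ and $\gamma \in \Z^d$ is the one with $\tbeta = \zeta \Mod 2$, and that such a pair exists.

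First I would establish non-emptiness. For each coordinate $l$, write $\zeta_l = 2q_l + r_l$ with $q_l \in \Z$ and $r_l \in \{0,1\}$ the remainder of $\zeta_l$ modulo $2$; this is just integer division. Setting $\gamma_l = q_l$ and $\tbeta_l = r_l$ gives $\tbeta_l = \zeta_l - 2\gamma_l \in \{0,1\}$, so $(\tbeta,\gamma) \in \mathcal{P}$ and $\mathcal{P} \neq \emptyset$. Note this choice has $\tbeta = \beta := \zeta \Mod 2$ by construction.

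Next I would prove uniqueness of the $\tbeta$-component. Fix any $(\tbeta,\gamma) \in \mathcal{P}$ and any coordinate $l$. From $\tbeta_l = \zeta_l - 2\gamma_l$ we get $\tbeta_l \equiv \zeta_l \pmod 2$. Since $\tbeta_l \in \{0,1\}$, it is the unique representative of its residue class in $\{0,1\}$, hence $\tbeta_l = \zeta_l \Mod 2 = \beta_l$. Doing this for every $l$ yields $\tbeta = \beta$, as claimed. (If one prefers a fully elementary case split mirroring the previous lemma: if $\zeta_l$ is even then $2 \mid \tbeta_l$ forces $\tbeta_l = 0 = \beta_l$; if $\zeta_l$ is odd then $\tbeta_l$ odd forces $\tbeta_l = 1 = \beta_l$.)

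I do not expect any real obstacle here: the lemma is elementary number theory dressed up as a polytope statement, and the proof is a two-line argument per coordinate. The only point requiring a little care is making sure the definition of $x \Mod 2$ (component-wise, with values in $\{0,1\}$) is consistent with the "remainder" interpretation used above, and that $\gamma$ ranging over $\Z^d$ (not $\Z_{\ge 0}^d$) is what allows negative $\zeta_l$ to be handled uniformly. Once that is noted, the result follows immediately.
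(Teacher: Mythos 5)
Your proof is correct and follows essentially the same route as the paper's: existence via the integer division decomposition $\zeta_l = 2q_l + r_l$, and uniqueness of $\tbeta$ by noting the constraint forces $\tbeta_l \equiv \zeta_l \pmod 2$ with $\tbeta_l \in \{0,1\}$, which is exactly the per-coordinate even/odd case split the paper performs. Your write-up is in fact slightly more explicit about why $\tbeta_l$ is the unique representative of its residue class, but there is no substantive difference.
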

\begin{proof}
Non-emptiness follows since $\zeta \in \Z^d$ per assumption. For any $l\in \{1,\ldots,d\}$ we have to distinguish the following cases:
    \begin{align*}
    &\zeta_l \Mod 2=0\implies \gamma_l=\frac{\zeta}{2},\, \tbeta_l=0=\beta_l\\
    &\zeta_l \Mod 2=1 \implies \gamma_l=\frac{\zeta-1}{2},\, \tbeta_l=1=\beta_l
    \end{align*}
    Thus $\tbeta=\beta$.
\end{proof}

For each bidder $i \in N$ let $\tvi{}: \X \to \R_+$ be a Fourier-sparse approximation of the bidder's reported value function $\pvi{}$. Then the \textit{Fourier transform-based WDP} was defined as follows:
\begin{definition}{(\textsc{Fourier transform-based WDP})}
\begin{align}\label{appendix_FourierTransformBasedWDP}
&\argmax\limits_{a \in \F}\sum_{i \in N}\tvi{a_i}\tag{FT-WDP}.
\end{align}
\end{definition}
Next, we proof Theorem~\ref{FourierTransformBasedMIPs}.
\setcounter{theorem}{0}
\begin{theorem}{\textsc{(Fourier Transform-based MIPs)}}\label{appendix_FourierTransformBasedMIPs}
Let $\tvi{}:\X\to \R$ be a k-Fourier-sparse approximation as defined in \eqref{appendix_FT3}, \eqref{appendix_FT4}, or \eqref{appendix_WHT}. Then there exists a constant $C>0$ such that the MIP defined by the following objective
\begin{align}\label{GenericMIP}
&\argmax\limits_{a \in \F, \beta_i\in \{0,1\}^{k}}\sum_{i \in N}\langle\htvSic,\alpha_i\rangle,
\end{align}
and for $i\in N$ one set of transform specific constraints \eqref{appendix_FT3Contraints1}--\eqref{appendix_FT3Constraints3}, or \eqref{appendix_FT4Contraints1}--\eqref{appendix_FT4Constraints3}, or \eqref{appendix_WHTConstraints1}--\eqref{appendix_WHTConstraints3}, is equivalent to \eqref{appendix_FourierTransformBasedWDP}.
\begin{align}
\textrm{\textbf{FT3:}}\hspace{0.5cm}\textrm{s.t.}\hspace{0.15cm}&\alpha_i\ge 1_k - W_i(1_m-a_i)\label{appendix_FT3Contraints1}\\
&\alpha_i\le 1_k-W_i(1_m-a_i)+C\beta_i\label{appendix_FT3Contraints2}\\
&0_k\le \alpha_i\le C(1_k-\beta_i)\label{appendix_FT3Constraints3}\\
\textrm{\textbf{FT4:}}\hspace{0.5cm}\textrm{s.t.}\hspace{0.15cm}&\alpha_i\ge 1_k - W_i a_i\label{appendix_FT4Contraints1}\\
&\alpha_i\le 1_k-W_i a_i+C\beta_i\label{appendix_FT4Contraints2}\\
&0_k\le \alpha_i\le C(1_k-\beta_i)\label{appendix_FT4Constraints3}\\
\textrm{\textbf{WHT:}}\hspace{0.5cm}\textrm{s.t.}\hspace{0.15cm}&\alpha_i=-2\beta_i+1_k\label{appendix_WHTConstraints1}\\
& \beta_i=W_i a_i-2\gamma_i\label{appendix_WHTConstraints2}\\
&\gamma_i \in \Z^k\label{appendix_WHTConstraints3}
\end{align}
\end{theorem}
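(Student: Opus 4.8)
The plan is to reduce the theorem to the three elementary facts already at hand: the closed forms for $\tvi{}$ in \Cref{SuccinctRepresentations}, the linearization of a component-wise maximum in \Cref{MaxAsLinearConstraints}, and the linearization of ``$\bmod\ 2$'' in \Cref{OddEvenAsLinearConstraints}. In each of the three cases the argument has the same shape: fix an arbitrary $a\in\F$; show that the transform-specific constraints, together with the declaration $\beta_i\in\{0,1\}^k$ from \eqref{GenericMIP}, define a non-empty feasible set for the auxiliary variables $(\alpha_i,\beta_i)$ (and $\gamma_i$ in the WHT case); show that every point of that set satisfies $\langle\htvSic,\alpha_i\rangle=\tvi{a_i}$; and conclude that the MIP objective, evaluated at any feasible completion of $a$, equals $\sum_{i\in N}\tvi{a_i}$. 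Since this holds for every $a\in\F$ and a feasible completion always exists, the maximum of \eqref{GenericMIP} over its feasible region equals $\max_{a\in\F}\sum_{i\in N}\tvi{a_i}$, and the projection of the MIP's set of optimizers onto the $a$-coordinates coincides with the argmax of \eqref{appendix_FourierTransformBasedWDP}; this is exactly the asserted equivalence.

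For FT3 I would instantiate \Cref{MaxAsLinearConstraints} with $\eta:=1_k-W_i(1_m-a_i)$, $\zeta:=0_k$, and with the lemma's binary vector being $\beta_i$: then \eqref{appendix_FT3Contraints1}--\eqref{appendix_FT3Constraints3} are verbatim the polytope \eqref{maxpolytopeStart}--\eqref{maxpolytopeEnd}. The one hypothesis to verify is $|\eta_l-\zeta_l|=|\eta_l|\le C$; since every row of $W_i$ lies in $\{0,1\}^m$ and $1_m-a_i\in\{0,1\}^m$, we have $(W_i(1_m-a_i))_l\in\{0,\dots,m\}$, hence $\eta_l\in\{1-m,\dots,1\}$, so $C:=m$ works, uniformly over all bidders (as $m$ is fixed). \Cref{MaxAsLinearConstraints} then gives $\alpha_i=\max(0_k,1_k-W_i(1_m-a_i))$, and \eqref{appendix_FT3} yields $\langle\htvSic,\alpha_i\rangle=\tvi{a_i}$. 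The FT4 case is identical with $\eta:=1_k-W_ia_i$ (using $(W_ia_i)_l\in\{0,\dots,m\}$, the same $C=m$, and \eqref{appendix_FT4}).

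For the WHT I would instantiate \Cref{OddEvenAsLinearConstraints} with $\zeta:=W_ia_i$, which indeed lies in $\Z^k$ (in fact in $\Z_{\ge 0}^k$): constraints \eqref{appendix_WHTConstraints2}--\eqref{appendix_WHTConstraints3} together with $\beta_i\in\{0,1\}^k$ are precisely \eqref{maxpolytopeStart2}--\eqref{maxpolytopeEnd2}, so the feasible set is non-empty and forces $\beta_i=(W_ia_i)\bmod 2$; substituting into \eqref{appendix_WHTConstraints1} gives $\alpha_i=1_k-2\beta_i$, and since $1-2(n\bmod 2)=(-1)^n$ for every $n\in\Z$ we obtain $\alpha_i=(-1)^{W_ia_i}$, whence \eqref{appendix_WHT} gives $\langle\htvSic,\alpha_i\rangle=\tvi{a_i}$; here any $C>0$ is admissible. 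I expect the only non-mechanical points to be pinning down the explicit constant $C=m$ that simultaneously serves FT3 and FT4, and the closing argument that, because for each fixed $a\in\F$ the inner feasible set is non-empty and pins the objective to the constant $\sum_{i\in N}\tvi{a_i}$, optimality transfers between the MIP and \eqref{appendix_FourierTransformBasedWDP}.
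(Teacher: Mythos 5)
Your proposal is correct and follows essentially the same route as the paper's proof: it invokes \Cref{SuccinctRepresentations} for the closed forms of $\tvi{}$, applies \Cref{MaxAsLinearConstraints} with $\zeta=0_k$ and $\eta=1_k-W_i(1_m-a_i)$ (resp.\ $\eta=1_k-W_ia_i$) for FT3 and FT4, and \Cref{OddEvenAsLinearConstraints} with $\zeta=W_ia_i$ for the WHT. Your additions --- the explicit choice $C=m$ and the spelled-out transfer-of-optimality argument between the MIP and \eqref{appendix_FourierTransformBasedWDP} --- are details the paper leaves implicit, not a different method.
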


\begin{proof}
We proof the equivalence for each of the FTs and corresponding MIPs separately.
\begin{itemize}
    \item \textbf{FT3}\\
    Let $C>0$ such that  $|\left[1_k - W_i(1_m-a_i)\right]_l|\le C$ for all $l\in \{1,\ldots,k\}, i \in N$, and $a \in \F$. We then define  $$\alpha_i:=\max\left(0_k, 1_k - W_i(1_m-a_i)\right)\,\,\textrm{for all }i\in N,$$ and apply for each $\alpha_i$ Lemma \ref{MaxAsLinearConstraints} to $\zeta:=0_k$ and ${\eta:=1_k - W_i(1_m-a_i)}$, which concludes the proof for FT3.\newline
    \item \textbf{FT4}:\\
    This follows analogously as for the FT3 when defining $\alpha_i:=\max(0_k, 1_k - W_i a_i),\, \zeta:=0_k$ and $\eta:=1_k - W_i a_i$.\newline
    \item \textbf{WHT}:\\
    Since for all $i\in N$ and $a\in\F$ it holds that $W_i a_i \in \Z^k$, an immediate consequence from Lemma \ref{OddEvenAsLinearConstraints} is that $\beta_i = W_i a_i \Mod 2$ and $(-1)^{W_ia_i}=-2\beta_i+1_k$, which concludes the proof for the WHT.
\end{itemize}
\end{proof}

\section{Analyzing the Potential of a FT-based CA}

\subsection{Fourier Sparsity - GSVM and MRVM}\label{subsec:appendix_fourier_sparsity}
In this section, we provide the Fourier sparsity results for GSVM and MRVM. In Figure~\ref{fig:energy_distribution_gsvm}, we present the mean over 30 GSVM instances and bidder types. Figure~\ref{fig:energy_distribution_gsvm} shows that GSVM is low degree $\leq 2$ in all considered FTs and thus can be represented using less or equal to $172=\binom{18}{0}+\binom{18}{1}+\binom{18}{2}$ FCs with any FT.

\begin{figure}[t!]
        \centering
        \includegraphics[width=1\columnwidth,trim= 15 15 10 10 ,clip]{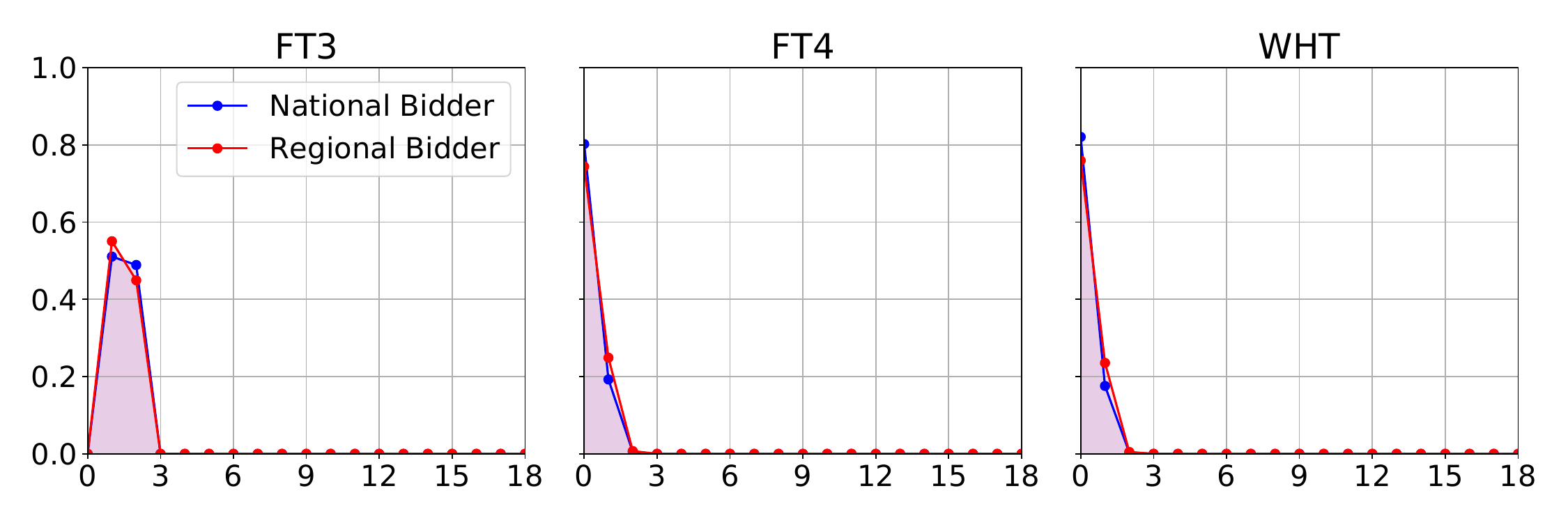}
        \vskip -0.2cm
        \caption{Spectral energy distribution in GSVM for different notions of FTs. For each cardinality (x-axis), we collect the spectral energy (y-axis) of all frequencies of that cardinality and normalize by the total spectral energy.}
        \label{fig:energy_distribution_gsvm}
\end{figure}

To the best of our knowledge it is not possible to compute the exact FCs with respect to any of the considered FTs for all MRVM bidders (see \cite{wendler2020learning}).
In Figure~\ref{fig:energy_distribution_mrvm}, we thus depict the \emph{approximate} WHT-energy-distribution for MRVM. In order to do so, we used the RWHT algorithm by \cited{amrollahi2019efficiently}, which recovers the largest WHT-FCs with high probability. Note that here we do not normalize by the total energy since the total energy is an unknown quantity in the space of $2^{98}$ bundles.

Figure~\ref{fig:energy_distribution_mrvm} shows the \emph{approximate} WHT-energy-distribution for MRVM. We see that while most of the energy is contained in the low degree FCs $\leq2$, there is still some non negligible energy in the FCs of degree 3--7.

\begin{figure}[t!]
        \centering
        \includegraphics[width=1\columnwidth,trim= 15 15 10 10 ,clip]{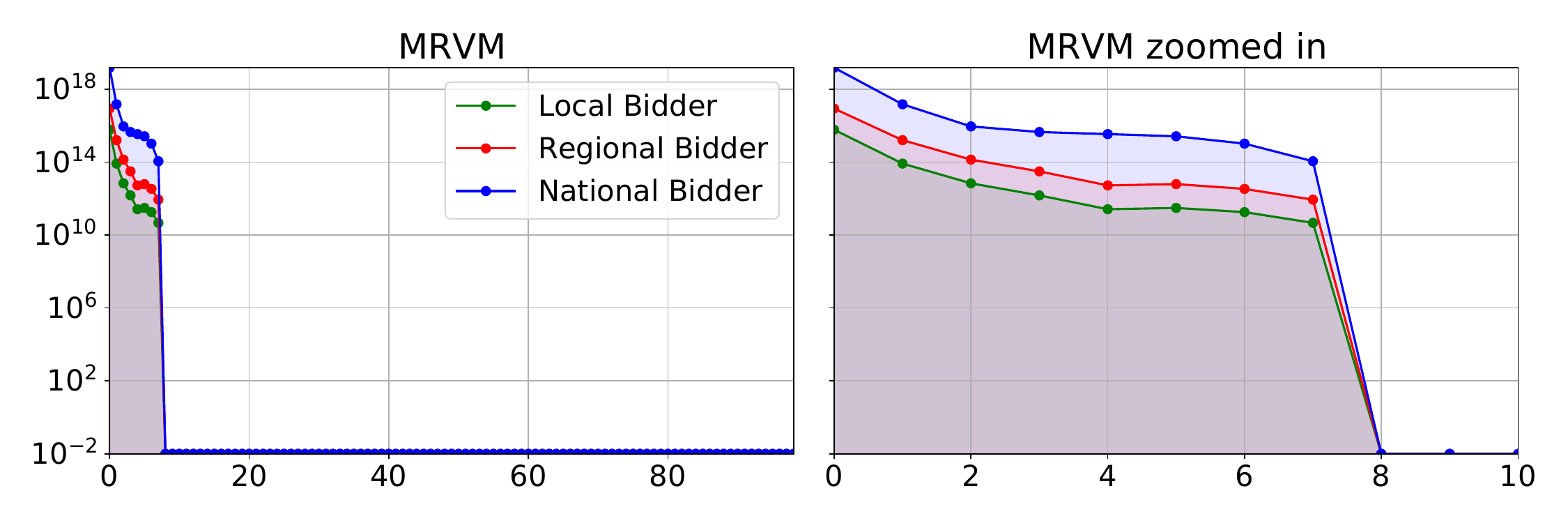}
        \vskip -0.2cm
        \caption{Approximate spectral energy distribution in MRVM for WHT computed with the robust WHT algorithm from \protect\cite{amrollahi2019efficiently}. For each cardinality (x-axis), we collect the spectral energy (y-axis) of all frequencies of that cardinality.}
        \label{fig:energy_distribution_mrvm}
\end{figure}

\subsection{Reconstruction Error of Fourier Transforms}\label{subsec:appendix_reconstruction_error}
Using Procedure~\ref{appendix_Best FCs Full Information Setting}, we determine the best $k$-Fourier-sparse approximation $\tvi{}$ for all considered FTs.

\setcounter{algoprocedure}{0}
\begin{algoprocedure}\textsc{(Best FCs Given Full Access to $\pvi{}$)}\label{appendix_Best FCs Full Information Setting}\ \\
Compute all FCs by using the full FT for each bidders' reported value function $\hvic=F \pvi{}$. Next, 
compute $\tvi{}$ by determining the $k$ best FCs $\htvSic$ in terms of L2 error $\|\pvi{} - \tvi{}\|_2^2$ as follows:
\begin{enumerate}[label=\roman*.]
\item\textbf{WHT:} use the FCs of $\hvic$ with the $k$ largest absolute values.
\item\textbf{FT3 and FT4:} use the FCs with the $k$ largest coefficients $|\hvic(y)|\|F^{-1}_{\cdot,y}\|_2$, where $F^{-1}_{\cdot,y}$ denotes the $y$\textsuperscript{th} column.
\end{enumerate}
\end{algoprocedure}
In the following, we present the technical details for \textit{i.} and \textit{ii.} in Procedure~\ref{appendix_Best FCs Full Information Setting}.

Let $F$ denote the corresponding matrix representation of the FT3, FT4, or the WHT. Furthermore, fix bidder $i \in N$ and for $\Si{1}\subseteq \supp(\hvic),\,|\Si{1}|=k\ll 2^m$ let $$\tvi{}=\sum_{y \in \Si{1}} F^{-1}_{x,y}\hvic(y)=F^{-1}_{\cdot,\Si{1}} \hvSic,$$ 
be a $k$-Fourier-sparse approximation, where we denote by $F^{-1}_{\cdot,\Si{1}}$ the sub matrix of $F^{-1}$ obtained by selecting the columns indexed by the bundles in $\Si{1}$.
\subsubsection{Best Fourier Coefficients WHT}\label{BestFourierCoefficientsFortheWHT}
For the WHT, we consider the following optimization problem of selecting the $k$-best FCs with respect to the quadratic error:
\begin{align}\label{quadraticError}
\Si{1}^{*}\in \argmin\limits_{\Si{1}\subseteq \supp(\hvic),\,|\Si{1}|=k}\|\pvi{} - F^{-1}_{\cdot,\Si{1}} \hvSic\|_2^2.
\end{align}
Then, it follows that $\Si{1}^{*}$ consists out of those bundles $\{y^{(l)}\}_{l=1}^{k}$ with the largest absolute value of the corresponding FCs $\{\hvic(y^{(l)})\}_{l=1}^{k}$. This can be seen as follows.
\begin{align}
    &\|\pvi{} - \underbrace{F^{-1}_{\cdot,\Si{1}} \hvSic}_{\tvi{}}\|_2^2=\|\sum_{y \not\in \Si{1}} F^{-1}_{\cdot,y} \hvic(y) \|_2^2=\\
    =& \sum_{x,y \not\in \Si{1}} \langle F^{-1}_{\cdot,x},F^{-1}_{\cdot,y}\rangle \hvic(x)\hvic(y)=\\
    =&\sum_{y \not\in \Si{1}} \hvic(y)^2,
\end{align}
where in the last equality we used that $F^{-1}$ is an orthogonal matrix and thus its columns fulfill $\langle F^{-1}_{\cdot,x},F^{-1}_{\cdot,y} \rangle=\mathbb{I}_{x=y}$.

For the other non-orthogonal transforms FT3 and FT4, we use a heuristic based on the triangular inequality to select the $k$ ``best'' FCs, which we present next.

\subsubsection{Best Fourier Coefficients FT3 and FT4}\label{BestFourierCoefficientsFT3FT4}
For FT3 and FT4, we use the triangular inequality $$ \|\pvi{} - \underbrace{F^{-1}_{\cdot,\Si{1}} \hvSic}_{\tvi{}}\|_2 \leq \sum_{y \not\in \Si{1}} |\hvic(y)|\|F^{-1}_{\cdot,y}\|_2 $$ to get an upper bound of the quadratic error $ \|\pvi{} - \tvi{}\|_2^2$ and select the FCs with the $k$ largest coefficients $|\hvic(y)|\|F^{-1}_{\cdot,y}\|_2$, where $F^{-1}_{\cdot,y}$ denotes the $y$\textsuperscript{th} column of $F^{-1}$.

\section{A Practical Hybrid ICA Mechanism}

\subsection{Technical Details of \textsc{Hybrid Ica}}\label{subsec:appendix_DetailsOfHybridICA}
In this section, we provide the technical details for the implementation of \textsc{Hybrid Ica}. In particular, we explain in more detail the Fourier Transform-based procedures \ref{Best FCs of Neural Networks}--\ref{FitFSAtoreports}.

\newpage
\subsubsection{Determining the best FCs of NN}
\setcounter{algoprocedure}{2}
\begin{algoprocedure}\textsc{(Best FCs of Neural Networks)}\label{Best FCs of Neural Networks}
\begin{enumerate}[label=\roman*.]
\item If $m\le 29:$ calculate the full FT of the NNs and select the best FCs.
\item If $m>29:$ use sparse FT algorithms, i.e., RWHT for WHT and \textit{SSFT} \cite{wendler2020learning} for FT3 and FT4 to obtain the best FCs of the NNs.
\end{enumerate}
\end{algoprocedure}
Using \textbf{Procedure~\ref{Best FCs of Neural Networks}} we determine the best FCs of the NN estimates of each bidder. We distinguish the two cases:
\begin{itemize}
\item \textbf{Small number of items:} if the number of items is small enough, i.e., $m \leq 29$, we determine the best FCs (and associated locations $\Si{1} \subset \X$) of $\Ni{}$ for all FTs by computing its full Fourier transform $F\Ni{}=\hNi$ using the respective algorithms from \cited{puschel2020discrete}.
\item \textbf{Large number of items:} if the number of items is too large, i.e., $m>29$, we cannot compute the full Fourier transform and thus require sparse FT algorithms to compute the locations of the best FCs. For the WHT, we can do so by using the \textit{robust sparse WHT algorithm (RWHT)} by \cited{amrollahi2019efficiently}. For FT3 and FT4, we use the \textit{sparse set function Fourier transform (SSFT)} algorithm by \cited{wendler2020learning}. Both algorithms compute Fourier-sparse approximations by only using queries from the corresponding set function (= a bidder's value function), and thus do not require a representation of the exponentially large full set function.

\item \textbf{Best WHT FCs:} for the WHT the best FCs are the ones with the largest absolute values (see Section~\ref{BestFourierCoefficientsFortheWHT}).
We select the locations of the $\psuperset$ best NN FCs, i.e., $|\Si{1}| = \psuperset$. We do so because we are going to use the \textit{compressive sensing method} by \cited{stobbe2012learning}, which only requires a superset of the support $\supp (\hvic)$, in \textbf{Procedure \ref{FitFSAtoreports}} to fit a Fourier-sparse approximations $\tvi{}$ to the bidders' reports $R_i$. In our experiments, we set $\psuperset = 2,000$ for all domains. Ideally, we would like to choose $\psuperset$ as large as possible to ensure that the best NN FCs actually overlap with the best FCs of $\pvi{}$, however, there is a trade-off between the number of samples (and also running time) required by the \textit{compressive sensing method} and the size of the support superset. We did not optimize this hyperparameter.

\item \textbf{Best FT3 and FT4 FCs:} for the FT3 and FT4 we use a heuristic based on the triangular inequality to select the best FCs and their locations (see Section~\ref{BestFourierCoefficientsFT3FT4}). As neither FT3 nor FT4 are orthogonal or satisfy the restricted isometry property required by the \textit{compressive sensing method}, we only take the best $\pfr$ locations of NN FCs, i.e., $|\Si{1}| = \pfr$.
\end{itemize}
Notice that the goal of \textbf{Procedure \ref{Best FCs of Neural Networks}} is solely to determine a set of bundles $\Si{1} \subseteq \X$ that is likely to contain many of the dominant FCs of $\pvi{}$. This set $\Si{1}$ is then used to determine reconstructing queries $\Si{2}$.
\newpage
\subsubsection{Determining reconstruction queries}
\begin{algoprocedure}\textsc{(Fourier Reconstruction Queries)}\label{Fourier Reconstruction Queries}
\begin{enumerate}[label=\roman*.]
\item\textbf{FT3 and FT4:} use the sampling theorems by \cited{puschel2020discrete} to determine queries for the bidders, i.e., bundles $\Si{2}\subseteq\X$, that enable a reconstruction of $\tvi{}$.
\item\textbf{WHT:} use the same queries as for FT4.
\end{enumerate}
\end{algoprocedure}
We make use of the sampling theorems for FT3 and FT4 presented by \cited{puschel2020discrete} to obtain reconstruction queries, i.e., queries that help obtaining a small reconstruction error $\|\pvi{} - \tvi{}\|_2$. Our rationale here is that the queries given by the sampling theorem $\Si{2}$ would lead to $\|\pvi{} - \tvi{}\|_2 = 0$ for $\supp(\hvic) = \Si{1}$. The sampling theorem by \cited{puschel2020discrete} selects rows $\Si{2}$ such that $F^{-1}_{\Si{2}, \Si{1}}$ is of full rank, because then (iff $\supp(\hvic) = \Si{1}$) the Fourier coefficients $\htvSic$ are the solution of the linear system of equations ${\hat{v}_{i \mid \Si{2}} = F^{-1}_{\Si{2}, \Si{1}} \htvSic}$. In particular, the theorem yields 
\begin{itemize}
\item \textbf{FT3:} $\Si{2}=\Si{1}$,
\item \textbf{FT4:} $\Si{2} = \set{1_m - y: y \in \Si{1}}$.
\end{itemize}%

\noindent For the WHT there is no sampling theorem and we use 
\begin{align*}
\Si{2} = \set{1_m - y: |\hNi(y)| \text{ is in the }\pfr \text{ largest in } \Si{1}}
\end{align*}
as a heuristic to obtain reconstruction queries. Choosing $\Si{2}$ in that way empirically often leads to a full-rank submatrix $F^{-1}_{\Si{2},\mathcal{Z}}$ of the inverse WHT obtained by selecting the rows indexed by $\Si{2}$ and the columns indexed by $\mathcal{Z}$, where $\mathcal{Z} \subseteq \Si{1}$ are the locations of the $\pfr$ in absolute value largest FCs in $\Si{1}$, i.e., $\mathcal{Z}:= \set{y: |\hNi(y)| \text{ is in the }\pfr \text{ largest in } \Si{1}}$. If $F^{-1}_{\Si{2}, \mathcal{Z}}$ is full rank and $\supp(\hvic) = \mathcal{Z}$, the Fourier coefficients $\phi_{\pvi{}\mid \mathcal{Z}}$ are the solution of the linear system of equations ${\hat{v}_{i \mid \Si{2}} = F^{-1}_{\Si{2}, \mathcal{Z}} \phi_{\pvi{}\mid \mathcal{Z}}}$.  

\subsubsection{Fitting Fourier-sparse approximations}
\begin{algoprocedure}\textsc{\small(Fit Fourier-sparse $\tvi{}$ to Reports)}\label{FitFSAtoreports}
\begin{enumerate}[label=\roman*.]
\item\textbf{FT3 and FT4:} solve the least squares problem defined by the best FCs and reports $R_i$.
\item\textbf{WHT:} use the \textit{compressive sensing method} by \cited{stobbe2012learning} defined by the best FCs and reports $R_i$.
\end{enumerate}
\end{algoprocedure}
Lastly, we need to fit our Fourier-sparse approximations $\tvi{}$ with $\supp(\phi_{\tvi{}}) = \Si{1}$ to the elicited reports $R_i\coloneqq\left\{\left(x^{(l)},\pvi{x^{(l)}}\right)\right\}$. That is, we determine values $w \in \R^{|\Si{1}|}$ for the FCs of the Fourier-sparse approximation $\tvi{} = F^{-1}_{\cdot,\Si{1}} w$.
\begin{itemize}
\item\textbf{WHT:} as already mentioned, for the WHT this is done using \textit{the compressive sensing method} by \cited{stobbe2012learning}. We calculate the full regularization path for the L1-regularization parameter $\lambda$ using the LARS method from \cited{efron2004least}, and select the $\lambda$ that yields $\psupport$ non zero FCs. In our experiments we set $\psupport=100$ in GSVM and LSVM and $\psupport=500$ in MRVM. We did not optimize this hyperparameter.
\item\textbf{FT3 and FT4:} for FT3 and FT4 we cannot use \textit{the compressive sensing method} and instead solve the following least squares problem $$\min_{w \in \mathbb{R}^{\pfr}} \sum_{(x, \pvi{x}) \in R_i} \left(\pvi{x} - \underbrace{\left(F^{-1}_{\cdot,\Si{1}} w\right)(x)}_{=\tvi{x}}\right)^2.$$
\end{itemize}

In Figure~\ref{fig:HybridICADetails}, we present a flow diagram of the different algorithms we use in \textsc{Hybrid Ica}.

\subsection{Details of NNs Support Discovery Experiments}\label{subsec:appendix_NN_support_discovery}
In this section, we discuss the energy ratio plot (see \Cref{fig:energy_ratio}  in the main paper) for the large domain MRVM with $m=98$ items. Recall, that in MRVM one cannot compute the full FT analytically and one has to use sparse FT algorithms to obtain an approximation of the true $k$-best possible frequencies $\mathcal{S}_i^{*} = \{\overset{*}{y}{}^{(1)}, \dots, \overset{*}{y}{}^{(k)}\}$. For this we use the recently developed RWHT algorithm from \cited{amrollahi2019efficiently}, which computes the largest WHT-FCs with high probability and thus gives us an approximation $\mathcal{S}_i^{\text{rwht}}\approx\mathcal{S}_i^{*}$. As in the main paper, we then calculate for each bidder $i\in N$ the \emph{energy ratio} of the support found by the corresponding NNs $\mathcal{\tilde{S}}_i$ and $\mathcal{S}_i^{\text{rwht}}$.

\subsection{Details of Efficiency Experiments}\label{subsec:appendix_efficiency_experiments}

In Figures \ref{fig:appendix_GSVMHybridICAResults}--\ref{fig:appendix_MRVMHybridICAResults}, we present detailed efficiency results of \textsc{Hybrid Ica} in all three SATS domains, where we used the best found configuration of hyperparameters from \Cref{tab:HybridICABestConfig} in the main paper.

In the upper plot we show in each figure the efficiency of the different phases of \textsc{Hybrid Ica}.\footnote{In MRVM we abbreviate the different query types as follows: RI=random initial queries, 1-55=\textsc{Mlca} iterations, FR=Fourier reconstruction queries, FA=Fourier allocation queries.} In the lower plot, we present a histogram of the final efficiency distribution over 100 (in GSVM and LSVM) and 30 (in MRVM) new CA instances. To enable a head-to-head comparison, we use for the test set of CA instances the seeds 1--100 in GSVM and LSVM, and seeds 51--80 in MRVM. 

All experiments were conducted on machines with Intel Xeon E5 v4 2.20GHz processors with 24 cores and 128GB RAM or with Intel E5 v2 2.80GHz processors with 20 cores and 128GB RAM.


\subsubsection{GSVM (Figure~\ref{fig:appendix_GSVMHybridICAResults})}
After $30$ random queries, \textsc{Hybrid Ica} achieves an (avg.) efficiency of $66\%$. Next, after three \textsc{Mlca} iterations, i.e., $21$ \textit{MLCA queries}, 99 instances have an efficiency $\ge90\%$.\footnote{Recall that \textit{MLCA} asks each bidder $n$ queries per iteration.} Finally, the Fourier-based queries significantly increase the (avg.) efficiency by 1.87 percentage points from 98.1\% to 99.97\%.\!\!
Furthermore, we present in the lower plot a histogram of the final efficiency distribution. We see, that for 94 out of 100 instances \textsc{Hybrid Ica} impressively achieves an economic efficiency of 100\% using in total only 100 value queries per bidder.

\subsubsection{LSVM (Figure~\ref{fig:appendix_LSVMHybridICAResults})}
Starting with $\pr=30$ random initial queries, \textsc{Hybrid Ica} achieves an average efficiency of approximately $62\%$. Next, \textsc{Hybrid Ica} performs 5 \textsc{Mlca} iterations. After these 5 \textsc{Mlca} iterations, \textsc{Hybrid Ica} already found for each of the 100 instances an allocation with an efficiency of at least 80\% with an average efficiency of 97.80\%. Here, in the non-sparse LSVM domain, \emph{the Fourier reconstruction and allocation queries} can increase the efficiency of some outliers arriving at an average efficiency of 98.74\%. In the histogram, we see, that for 66 instances, \textsc{Hybrid Ica} was able to achieve full efficiency. Overall, we observe, that in the non-sparse LSVM the Fourier-based approach is not as effective as in the sparse GSVM, but still leads to results, that statistically match the efficiency of \textsc{Mlca} (see \Cref{tab:HybridICAResultsSummary} in the main paper).

\subsubsection{MRVM (Figure~\ref{fig:appendix_MRVMHybridICAResults})}
Starting with $\pr=30$ random initial queries, \textsc{Hybrid Ica} achieves an average efficiency of approximately $50\%$. Next, \textsc{Hybrid Ica} performs 55 \textsc{Mlca} iterations and asks in total 220 \textit{MLCA allocation queries}. Here, we observe a steep increase in efficiency at the beginning. In later iterations the increase in efficiency gets smaller resulting in an average efficiency of $94.20\%$. In MRVM, the best query split we found uses $\pfr = 0$ \emph{Fourier-based reconstruction queries}, thus the Fourier reconstruction queries (FR) do not change the efficiency distribution in Figure~\ref{fig:appendix_MRVMHybridICAResults}. However, the $\pfa = 250$ \emph{Fourier-based allocation queries} significantly increase the efficiency further by $2.43\%$ resulting in a final average efficiency of $96.63\%$.

\begin{figure*}[ht!]
\centering
\resizebox{0.7\textwidth}{!}{
        \includegraphics[width=\textwidth]{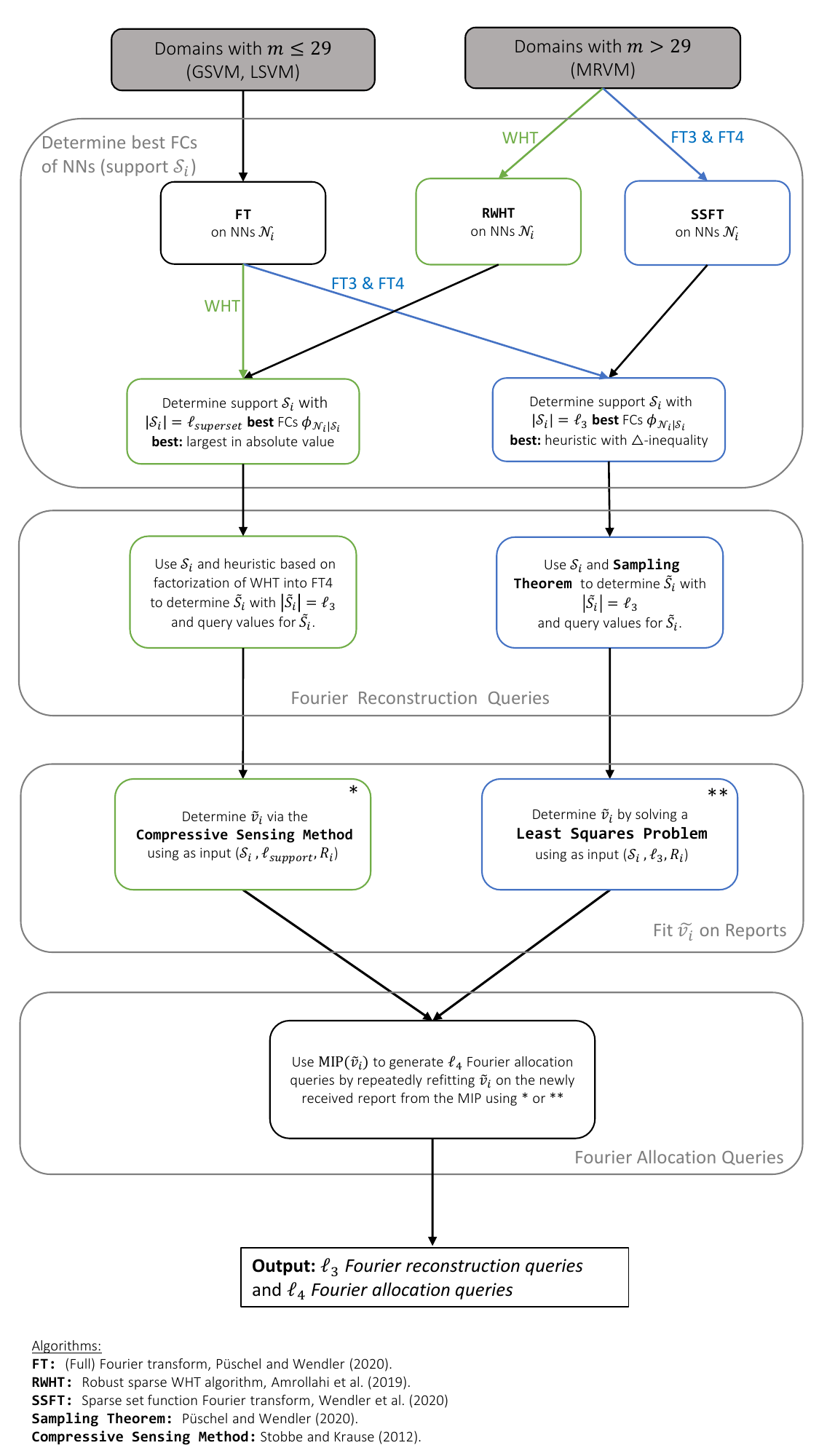}
    }
\caption{Overview of Fourier Transform-based Procedures in \textsc{Hybrid Ica}.}
\label{fig:HybridICADetails}
\end{figure*}

\begin{figure*}[ht]
        \begin{minipage}{.5\textwidth}
        \centering
        \includegraphics[width=1\columnwidth]{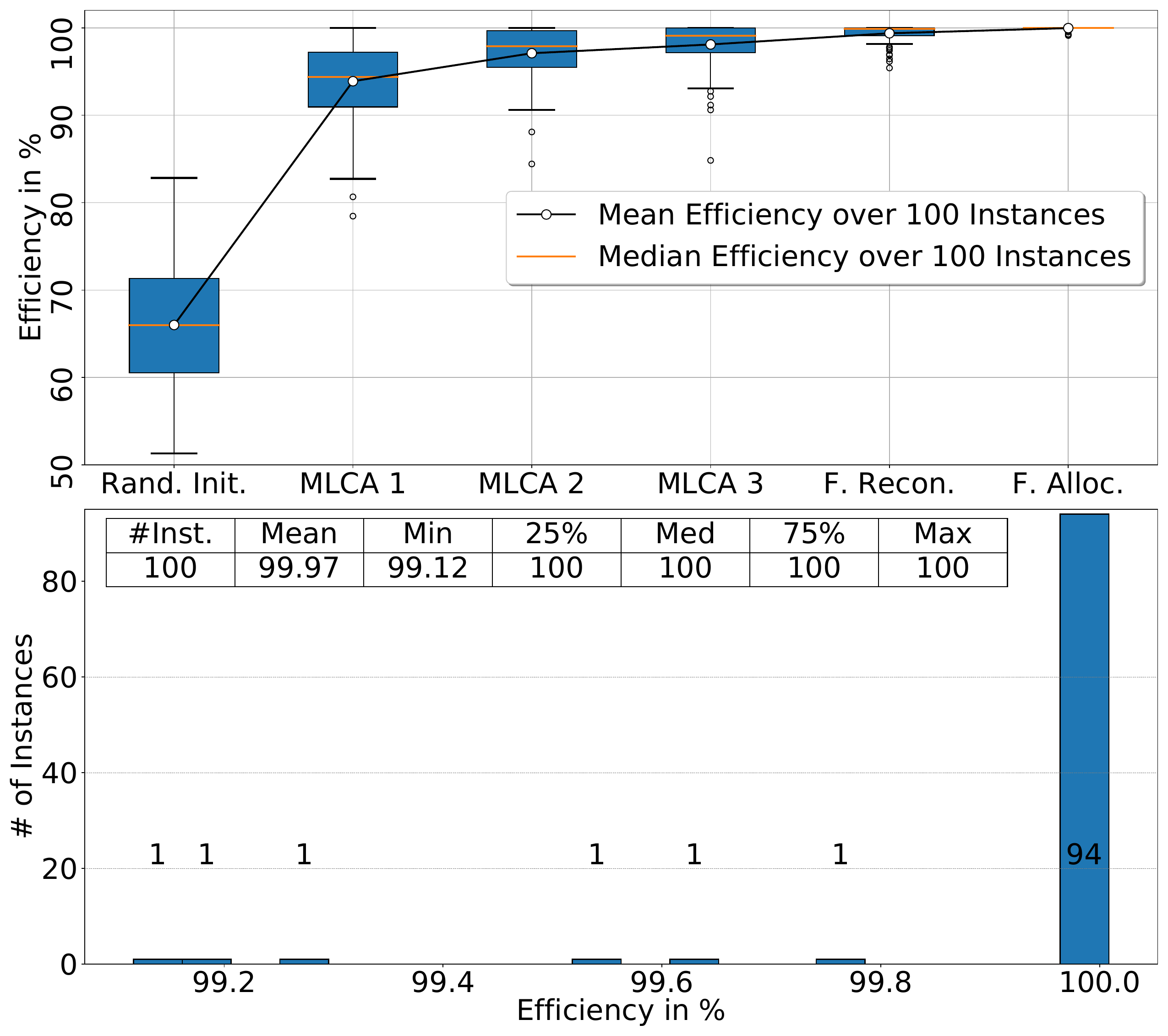}
        \vskip -0.2cm
        \caption{Details of \textsc{Hybrid Ica} in GSVM.}
        \label{fig:appendix_GSVMHybridICAResults}
        \end{minipage}
        \begin{minipage}{.5\textwidth}
        \centering
        \includegraphics[width=1\columnwidth]{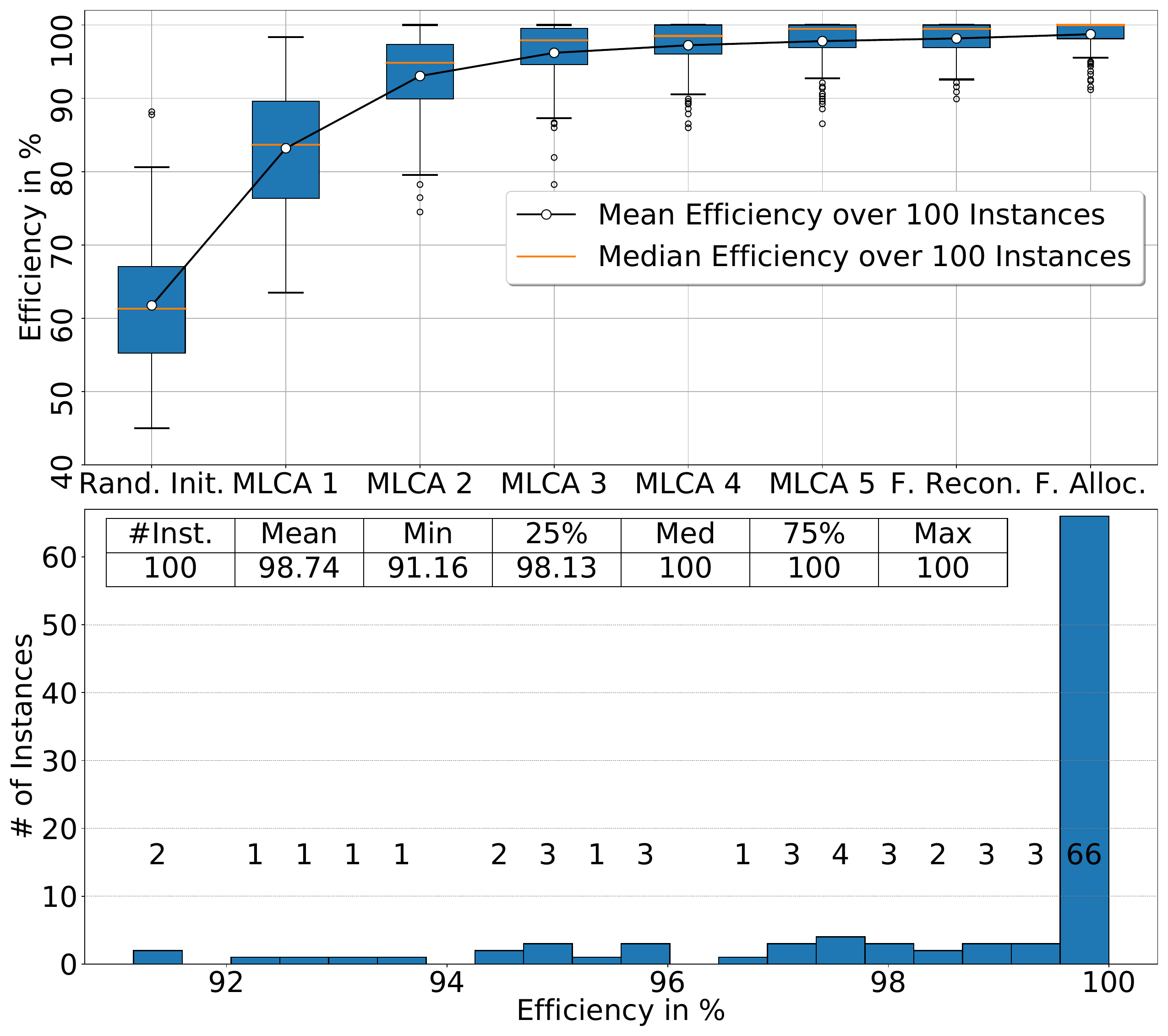}
        \vskip -0.2cm
        \caption{Details of \textsc{Hybrid Ica} in LSVM.}
        \label{fig:appendix_LSVMHybridICAResults}
        \end{minipage}
        \vspace{3cm}\\
        \begin{minipage}{1\textwidth}
        \centering
        \includegraphics[width=1\columnwidth]{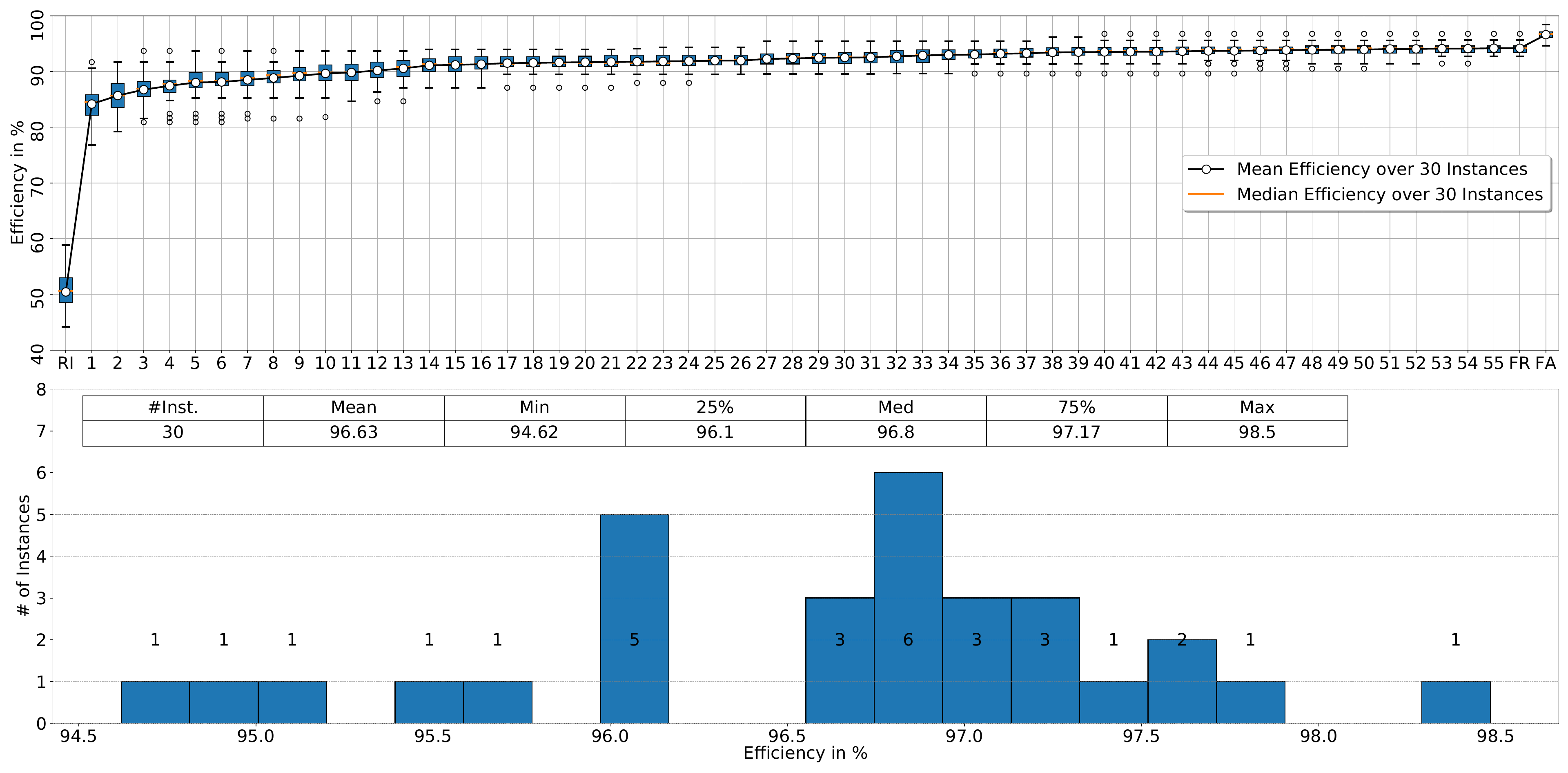}
        \vskip -0.2cm
        \caption{Details of \textsc{Hybrid Ica} in MRVM.}
        \label{fig:appendix_MRVMHybridICAResults}
        \end{minipage}
\end{figure*}
\end{document}